\newcommand{\DB}{\ensuremath{\mathcal{D}}}
\newcommand{\db}{\ensuremath{\mathcal{D}}} 
\newcommand{\statevec}{\ensuremath{\vec{s}} }
\newcommand{\query}{\ensuremath{q} } 
\newcommand{\sd}{\ensuremath{\mathcal{S}}} 
\newcommand{\td}{\ensuremath{\mathcal{T}}} 
\newcommand{\tint}{\ensuremath{t} }
\newcommand{\dbobj}{\ensuremath{o} }
\newcommand{\tm}{\ensuremath{M}} 
\newcommand{\probm}{\ensuremath{J}}
\newcommand{\pw}{\ensuremath{\bullet}} 
\newtheorem{definition}{Definition}
\newtheorem{lemma}{Lemma}
\newtheorem{example}{Example}
\title{Probabilistic Nearest Neighbor Queries on Uncertain Moving Object
Trajectories\\
\small{Submitted for Peer Review 01.05.2013}\\
	Please get the significantly revised camera-ready version under http://www.vldb.org/pvldb/vol7/p205-niedermayer.pdf}
\begin{document}

\numberofauthors{1}

\author{{ Johannes Niedermayer$^*$, Andreas Züfle$^*$, Tobias
Emrich$^*$,}\\ {Matthias
 Renz$^*$, Nikos Mamoulis$^o$, Lei Chen$^+$, Hans-Peter Kriegel$^*$}
 \vspace{1.2mm}\\
 \fontsize{10}{10}\selectfont\itshape 
 $~^{*}$Institute for Informatics, Ludwig-Maximilians-Universität~München\\
 \fontsize{9}{9}\selectfont\ttfamily\upshape
 \{niedermayer,zuefle,emrich,kriegel,renz\}@dbs.ifi.lmu.de\\
 \\
  \fontsize{10}{10}\selectfont\rmfamily\itshape
 $~^{o}$Department of Computer Science, University of Hong Kong\\
 \fontsize{9}{9}\selectfont\ttfamily\upshape nikos@cs.hku.hk
  \vspace{1.2mm}\\
 \fontsize{10}{10}\selectfont\rmfamily\itshape $~^{+}$Department of Computer
 Science and Engineering,
 Hong Kong University of Science and Technology\\
 \fontsize{9}{9}\selectfont\ttfamily\upshape leichen@cse.ust.hk
} \maketitle
\begin{abstract}
    Nearest neighbor
    (NN) queries in trajectory databases have received significant
    attention in the past, due to their application in spatio-temporal
    data analysis. Recent work has considered the realistic
    case where the trajectories are uncertain; however, only simple
    uncertainty models have been proposed, which do not allow for 
    accurate probabilistic search. In this paper, we fill this gap
    by addressing probabilistic nearest neighbor queries in databases with
    uncertain trajectories modeled by stochastic processes, specifically
    the Markov chain model. We study three nearest neighbor query semantics that take as input a
    query state or trajectory $q$ and a time interval.
    For some queries, we show that no polynomial time solution can
    be found. For problems that can be solved in PTIME, we present
    exact query evaluation algorithms, while for the general case, we
    propose a sophisticated sampling approach, which uses Bayesian inference to
    guarantee that sampled trajectories conform to the observation data stored in
    the database. This sampling approach can be used in Monte-Carlo
    based approximation solutions. We include an extensive experimental
    study to support our theoretical results.
\end{abstract}

\section{Introduction}
\label{sec:introduction} With the wide availability of satellite,
RFID, GPS, and sensor technologies, spatio-temporal data can be
collected in a massive scale. The efficient management of such
data is of great interest in a plethora of application domains:
from structural and environmental monitoring and weather
forecasting, through disaster/rescue management and remediation,
to Geographic Information Systems (GIS) and traffic control and
information systems. In most current research however, each acquired trajectory,
i.e., the function of a spatio-temporal object that maps each point in time to a position
in space, is assumed to be known entirely without any uncertainty. However, the
physical limitations of the sensing devices or limitations of the data collection process introduce sources of
uncertainty.

Specifically, it is usually not possible to {\em continuously} capture the
position of an object for each point of time. In an indoor tracking environment
where the movement of a person is captured using static RFID sensors, the
position of the people in-between two successive tracking events is also not
available; the same holds for geo-social applications such as FourSquare. Also,
the frequency of data collection is often decreased to save resources such as battery power and wireless network traffic. In such scenarios, for a given
moving object, only a limited set of (location, time) observations is available.
In-between these observations the exact values are not explicitly stored in the
database and are thus uncertain.

In this work, we consider a database $\db$ of uncertain moving object
trajectories, where for each trajectory there is a set of 
observations for only some of the history timestamps. Thus, the
entire trajectory of an object is described by a time-dependent random
variable, i.e., a stochastic process.
Given a reference state or trajectory $q$ and a time interval $T$, we define
{\em probabilistic} nearest-neighbor (PNN) query semantics, which are
extensions of nearest neighbor queries in trajectory databases
\cite{FreGraPelThe07,GueBehXu10,IweSamSmi03,TaoPapShe02}. Specifically, a
P$\exists$NN (P$\forall$NN) query retrieves all objects in $\db$, which have sufficiently high probability to
be the NN of $q$ at one time (at the entire set of times) in $T$; a
probabilistic {\em  continuous} NN (PCNN) query finds for each
object $o\in \db$ the time subsets $T_i$ of $T$, wherein $o$
has high enough probability to be the NN of $q$ at the entire set of
times in $T_i$.

PNN queries find several applications in analyzing historical trajectory data.
For example, consider a geo-social network where users can publish their current
spatial position at any time by so-called check-ins. For a historical event,
users might want to find their nearest friends during this event, e.g. to share
pictures and experiences. As another application example, consider a collection
$\db$ of uncertain animal movements. PNN queries can be used to analyze animal
behavior according to how they moved relatively to a reference animal or object
(e.g., identify the {\em defender} of a herd who moves closely to and repulses
an attacking animal).

The main contributions of our work are as follows:

\begin{compactitem}
\item A thorough theoretical complexity analysis for variants of
probabilistic NN query problems and the identification of such
problems that are computationally hard.

\item Efficient algorithms for the introduced PNN query problems
that can be solved in polynomial time.

\item A sampling-based approximate solution for classes of hard
PNN problems which is based on Bayesian inference.

\item Hierarchical pruning strategies to speed-up
PNN queries exploiting the UST tree index \cite{EmrKriMamRenZue12}.

\item Thorough experimental evaluation of the proposed
concepts on real and synthetic data.

\end{compactitem}

The rest of the paper is structured as follows.
Section \ref{sec:related} reviews existing work related to NN search on
uncertain trajectories. Section \ref{sec:problemdef} reviews the Markov model
used to describe uncertain trajectories. Variants of NN search semantics based
on this model are also formally presented in this section. Section
\ref{sec:query} gives a theoretical analysis for each variant, identifying computationally hard
variants, while Section \ref{sec:pnnalgorithms} sketches polynomial solutions
for the remaining ones. An approximate solution is presented in Section
\ref{sec:sampling} that can be efficiently applied to any NN variant, yielding
very accurate results. This approach is based on Bayesian inference which adapts the Markov model
by conditioning its probabilities to observations in the future. The adaption is
necessary in order to obtain sample trajectories that are guaranteed to conform
with all observations. In Section \ref{sec:pruning}, we present a hierarchical
pruning strategy for NN queries on uncertain trajectories, which utilizes an
index structure proposed in our previous work.
An experimental evaluation of the efficiency and effectiveness of the proposed
techniques is presented in Section \ref{sec:experiments}.
Section \ref{sec:knn} briefly discusses how the presented solutions can be
easily adapted for the case of $k$NN queries having $k>1$ and shows the
complexity of such queries.
Finally, Section \ref{sec:conclusions} concludes this work.

\vspace{-1EM}
\section{Related Work}\label{sec:related}
Within the last decade, a considerable amount of research effort has been
      put into query processing in trajectory databases
      (e.g. \cite{TaoPapShe02, TaoFalPapLiu04, YuPuKou05, YioMokAre05,
      GueBehXu10}).
       In these works, the trajectories have been assumed to be certain, by employing linear \cite{TaoPapShe02} or more complex \cite{TaoFalPapLiu04} types of
      interpolation to supplement sparse observational data. However,
employing linear
      interpolation between consecutive observations might create
      impossible patterns of movement, such as cars travelling through lakes
      or similar impossible-to-cross terrain.
      Furthermore, treating the data as uncertain and answering
      probabilistic queries over them offers better
      insights\footnote{http://infoblog.stanford.edu/2008/07/why-uncertainty-in-data-is-great-posted.html}.

      \textbf{Uncertain Trajectory Modeling.}
      Several models of uncertainty paired
      with appropriate query evaluation techniques have been
      proposed for moving object trajectories (e.g. \cite{MokSu04,TraWolHinCha04,TraTamDinSchCru09,EmrEtAl11}). Many of these
      techniques aim at providing conservative bounds for the positions of
      uncertain objects. This can be achieved by employing geometric objects
      such as cylinders \cite{TraWolHinCha04, TraTamDinSchCru09} or beads
      \cite{TraChoWolYeLi10} as trajectory approximations. While these
      approaches allow to answer queries such as ``is it possible for object $o$ to
      intersect a query window $q$'', they are not able to assign probabilities
      to these events.

      Other approaches use independent probability density functions (pdf) at each point of time
      to model the uncertain positions of an object
      \cite{CheKalPra04,TraTamDinSchetal09,MokSu04}. However, as shown in
      \cite{EmrEtAl11}, this may produce wrong results (not in accordance
      with possible world semantics) for queries referring to a time
      interval because they ignore the
      temporal dependence between consecutive object positions in
      time. To capture such dependencies, recent approaches model the uncertain movement of objects based on stochastic processes.
In particular, in \cite{QiaTanJinLonDaiKuCha10,
  EmrEtAl11,ReLetBalSuc08,XuGuCheQiaoYu13}, trajectories are modeled based on
  Markov chains. This approach permits correct consideration of possible world semantics in
  the trajectory domain.

  Similar to our algorithm for adapting transition matrices is the
Baum-Welch algorithm for hidden Markov models (HMMs). This
algorithm aims at estimating \textit{time-invariant} transition
matrices and emission probabilities of a hidden Markov model. In
contrast, we assume this underlying model to be given, however we
aim at adapting it by computing
\textit{time-variant} transition matrices. 
Related to our algorithm is also the Forward-Backward-Algorithm for HMMs that
aims at computing the \textit{state distribution} of an HMM for each
point in time. In contrast, we aim at computing transition
matrices for each point in time, given a set of observations.

      \textbf{Nearest Neighbor Queries in Trajectory Databases.}
      In the context of certain trajectory databases there is not a
      common definition of nearest neighbor queries, but rather a set of
      different interpretations. In \cite{FreGraPelThe07}, given a query
      trajectory (or spatial point) $q$ and a time interval $T$, a NN query
      returns either the trajectory from the database which is closest to $q$
      during $T$ or for each $t \in T$ the
      trajectory which is closest to $q$. The latter problem has also been addressed in \cite{GueBehXu10}. Similarly,
      in \cite{KolGunTso99}, all trajectories which are nearest
      neighbors to $q$ for at least one point of time $t$ are computed.

      Other approaches consider \textit{continuous} nearest neighbor (CNN)
      semantics. In \cite{IweSamSmi03}, CNN queries were defined taking as input
      a static spatial query point $q$ and a trajectory database and returning for
      each point in time the trajectory closest to $q$. Other approaches
      \cite{PraWolChaDao97, TaoPapShe02} define the CNN problem differently:
      Given an input trajectory $q$ and a database consisting of spatial points,
      a CNN query segments $q$ such that for each segment $q_i \subseteq q$
      exactly one object from the database is the nearest neighbor of $q_i$.
      This approach was extended for objects with uncertain velocity and
      direction (thus considering a predictive setting rather than historical
      data) in \cite{HuaLiaLee08}; the solutions proposed only find
      possible results, but not result probabilities. Solutions for
      road network data were also proposed for the case where the velocities of  objects are
      unknown \cite{LiLiShuFan11}. Furthermore, \cite{TraTamDinSchCru09,TraTamCruSchHarZam11}
      extended the problem of continuous $k$NN queries (on historical
      search) to an uncertain setting, serving as important preliminary
      work, however, based on a model which is not capable to return answers
      according to possible world semantics.
Still, to date, there is no
      previous work addressing probabilistic NN queries over trajectory
      databases, considering possible world semantics.

\section{Problem Definition}
\label{sec:problemdef}
A spatio-temporal database $\db$ stores triples ($\dbobj_i$, \emph{time},
\emph{location}), where $\dbobj_i$ is a unique object
identifier, \emph{time $\in \td$} is a point in time and \emph{location $\in
\sd$} is a position in space. Semantically, each such triple corresponds to an \emph{observation} that object
 $\dbobj_i$ has been seen at some \emph{location} at some \emph{time}. In $\DB$, an object $\dbobj_i$
 can be described by a function $\dbobj_i(t) : \td \rightarrow \sd$ that maps each
 point in time to a location in space; this function is called
 \emph{trajectory}.

In this work, we assume  a discrete time domain $\td = \{0, \ldots, n\}$.
Thus, a trajectory becomes a sequence, i.e., a function on a discrete and ordinal
scaled domain. Furthermore, we assume a discrete state space of possible locations
(\textit{states}): $\sd = \{s_1, ..., s_{|\sd|}\} \subset \mathbb{R}^d$, i.e., we
use a finite alphabet of possible locations in a $d$-dimensional space. The way
of discretizing space is application-dependent: for example, in traffic applications we may use road crossings, in indoor tracking
applications we may use the positions of RFID trackers and rooms, and for
free-space movement we may use a simple grid for discretization.

\subsection{Uncertain Trajectory Model}
\label{subsec:model}
Let $\DB=\{\dbobj_1,...,\dbobj_{|\DB|}\}$ be a database containing
the trajectories of $|\DB|$ uncertain moving objects. For each
object $\dbobj \in \DB$ we store a set of observations
$\Theta^\dbobj=\{\langle \tint_1^\dbobj,\theta_1^\dbobj \rangle,
\langle \tint_2^\dbobj, \theta_2^\dbobj \rangle, \ldots, \langle
\tint_{|\Theta^\dbobj|}^\dbobj,\theta_{|\Theta^\dbobj|}^\dbobj
\rangle  \}$ where $\tint_i^\dbobj \in \td$ denotes the time and
$\theta_i^\dbobj \in \sd$ the location of observation
$\Theta_i^o$. W.l.o.g. let $\tint_1^\dbobj <
\tint_2^\dbobj<\ldots<\tint_{|\Theta^\dbobj|}^\dbobj$. Note that
the location of an observation is assumed to be certain, while the
location of an object between two observations is uncertain.

According to \cite{EmrEtAl11}, we can interpret the location of an
uncertain moving object $\dbobj\in\DB$ at time $\tint$ as a
realization of a random variable $\dbobj(\tint)$. Given a time
interval $[\tint_s, \tint_e]$, the sequence of uncertain locations
of an object is a family of correlated random variables, i.e., a
stochastic process.

This definition allows us to assess the probability of a possible
trajectory, i.e., the realization of the corresponding stochastic
process. In this work we follow the approaches from
\cite{EmrEtAl11,EmrKriMamRenZue12,XuGuCheQiaoYu13} and employ the
first-order Markov chain model as a specific instance of a
stochastic process. The state space of the model is the spatial
domain $\sd$. State transitions are defined over the time domain
$\td$. In addition, the Markov chain model is based on the
assumption that the position $o(t + 1)$ of an uncertain object $o$
at time $t + 1$ only depends on the position $o(t)$ of $o$ at time
$t$. Please note that our proposed techniques can be easily
applied to Markov chains of arbitrary order without any further
adaption of the algorithms.

The conditional probability
$$
\tm_{ij}^\dbobj(t):=P(\dbobj(t+1)=s_j|\dbobj(t)=s_i)
$$
is the \emph{transition probability} of $\dbobj$
from state $s_i$ to state $s_j$ at a given time $t$. Transition probabilities
are stored in a matrix $\tm^\dbobj(t)$, called {\em transition matrix of object
o at time t}.
Let $\statevec^\dbobj(\tint)=(s_1,\ldots,s_{|\sd|})^T$ be the distribution
vector of an object $\dbobj$ at time $\tint$, where $\statevec_i^\dbobj(\tint) =
P(\dbobj(t) = s_i)$. Without any further knowledge (from observations) the
distribution vector $\statevec^\dbobj(\tint+1)$ can be inferred from
$\statevec^\dbobj(\tint)$:

$$\statevec^\dbobj(\tint+1)=\tm^\dbobj(\tint)^T \cdot \statevec^\dbobj(\tint)$$

The traditional Markov model \cite{EmrEtAl11} uses forward
probabilities only. In Section \ref{sec:sampling}, we propose a Bayesian inference
approach, to condition this \emph{a-priori Markov chain} to an
adapted \emph{a-posteriori Markov chain} which also considers all observations of an
object.

\subsection{Nearest Neighbor Queries}
\label{subsec:querydef}
In this work we consider three types of time-parameterized nearest-neighbor
queries that take as input a certain reference state or trajectory $q$ and a set
of timesteps $T$.

\begin{definition}[P$\exists$NN Query]
A probabilistic $\exists$ nearest neighbor query retrieves all
objects $o \in \DB$ which have a sufficiently high probability
($P\exists NN$) to be the nearest neighbor of $q$ for at least one
point of time $t \in T$, formally:
\begin{align*}
&P\exists NNQ(q,\db,T,\tau) = \{o\in\DB: P\exists NN(o,q,\DB,T)
\geq\tau\}\\
&\text{where } P\exists NN(o,q,\DB,T) =\\
&P(\exists t \in T: \forall o'
  \in \db \setminus o: d(q(t), o(t)) \leq d(q(t), o'(t)))
\end{align*}
and $d(x,y)$ is a distance function defined on
spatial points, typically the Euclidean distance.
\end{definition}

This definition is a straightforward extension of the
spatio-temporal query proposed in \cite{FreGraPelThe07}.
In addition, we consider NN queries with the $\forall$ quantifier
(introduced in  \cite{EmrEtAl11} for range queries).

\begin{definition}[P$\forall$NN Query]
A probabilistic $\forall$ nearest neighbor query retrieves all
objects $o\in\DB$ which have a sufficiently high probability
($P\forall NN$) to be the nearest neighbor of $q$ for the entire
set of timestamps $T$, formally:
\begin{align*}
&P\forall NNQ(q,\db,T,\tau) = \{o\in\DB: P\forall NN(o,q,\DB,T)
\geq\tau\}\\
&\text{where }  P\forall NN(o,q,\DB,T) = \\
&P(\forall t\in T: \forall o'  \in \db \setminus o:
  d(q(t),o(t))\leq d(q(t),o'(t)))
\end{align*}
\end{definition}

In addition to the $\exists$ and $\forall$ semantics for
probabilistic nearest neighbor queries we now introduce a
\emph{continuous} query type which intuitively extends the
spatio-temporal continuous nearest-neighbor query
\cite{PraWolChaDao97,
TaoPapShe02} to apply on uncertain trajectories. 

\begin{definition}[PCNN Query]
\label{def:pcnnq} A probabilistic continuous nearest neighbor
query retrieves all objects $o\in\DB$ together with the set
of timesets $\{T_i\}$ where in each $T_i$ the object has a
sufficiently high probability to be always the nearest neighbor of
$q(t)$, formally:
\begin{align*}
PCNNQ(&q,\db,T,\tau) = \\ &\{(o,T_i): o\in\DB, T_i\subseteq T, P\forall
NN(o,q,\DB,T_i)\geq\tau\}.
\end{align*}
Analogously to the CNN query definition \cite{PraWolChaDao97, TaoPapShe02}, in
order to reduce redundant answers it makes sense to redefine the PCNN Query where
we focus on results that maximize $|T_i|$, formally:
\begin{align*}
PCNNQ(q,\db,T,\tau) =& \\ \{(o,T_i): o\in\DB, &T_i\subseteq T,
P\forall NN(o,q,\DB,T_i)\geq\tau\ \\
&\wedge\forall T_j\supset T_i:
P\forall NN(o,q,\DB,T_i)<\tau\}.
\end{align*}

\end{definition}
Note that according to this definition result sets $T_i \subseteq T$ do not
have to be connected.

\subsection{Query Evaluation Framework and Roadmap}

An intuitive way to evaluate a PNN query is to compute for each $o
\in \DB$ the probability that $o$ is the NN of $q$ in at least one
or in all timestamps of a time interval $T$. However, to speed up
query evaluation, in Section \ref{sec:pruning}, we show that it is
possible to prune some objects from consideration using an index
over $\DB$. Then, for each remaining object $o$, we have to
compute a probability (i.e., $P\exists NN(o,q,\DB,T)$ or $P\forall
NN(o,q,\DB,T)$) and compare it to the threshold $\tau$. In Section
\ref{sec:query}, we show that while computing $P\exists
NN(o,q,\DB,T)$ is computationally hard,  $P\forall NN(o,q,\DB,T)$
can be computed in PTIME. Therefore, in Section
\ref{sec:pnnalgorithms}, we present an algorithm for computing
$P\forall NN(o,q,\DB,T)$ exactly and a technique that uses the
algorithm as a module for the computation of PCNN queries. As
discussed in Section \ref{sec:sampling}, the harder case of
retrieving $P\exists NN(o,q,\DB,T)$ (and also $P\forall
NN(o,q,\DB,T)$) can be approximated by Monte-Carlo simulation: for
each object $o'\in \DB$ a trajectory is generated which conforms
to both the Markov chain model $M^{o'}$ and the observations
$\Theta^{o'}$ and all these trajectories are used to model a
possible world. By performing the NN query in all these possible
worlds and averaging the results, we are able to derive an
approximate value for $P\exists NN(o,q,\DB,T)$.

\section{Theoretical Analysis}\label{sec:query}
In this section, we formally show that P$\exists$NN 
queries cannot be computed efficiently, in contrast to P$\forall$NN and 
PCNN queries.

\subsection{The P$\exists$NN Query} 
\label{sec:existsnn} In a  P$\exists$NN query, for any candidate object $o \in
\DB$, we should consider the probability $P\exists NN(o, q, \DB, T)$. However, the
following lemma shows that this probability is hard to compute.

\begin{lemma}
\label{lem:nphard}
The computation of $P\exists NN(o, q, \DB, T)$ is NP-hard in $|\DB|$.
\end{lemma}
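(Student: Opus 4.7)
The plan is to reduce from 3SAT. Given a 3SAT instance $\phi = C_1 \wedge \cdots \wedge C_m$ over Boolean variables $x_1, \ldots, x_n$, I will construct in polynomial time a database $\DB$ of $n+1$ objects, a query trajectory $q$, and a time interval $T=\{t_1,\ldots,t_m\}$ such that the value $P\exists NN(o^*, q, \DB, T)$ directly encodes the fraction of satisfying assignments of $\phi$. Since $|\DB|=n+1$ is polynomial in the instance size, any polynomial-time algorithm computing this probability (even just deciding whether it is strictly less than $1$) would yield a polynomial-time algorithm for 3SAT, proving NP-hardness in $|\DB|$.

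For the construction, each variable $x_i$ is represented by an object $o_i\in\DB$ whose Markov chain $M^{o_i}$ splits at time $0$ into two disjoint deterministic paths with equal initial probability $1/2$, corresponding to the two truth values of $x_i$. The states along these paths are placed in space so that at time $t_j$, object $o_i$ sits at Euclidean distance $0.5$ from $q(t_j)$ if the literal of $x_i$ appearing in $C_j$ is satisfied by the branch's truth value, and at distance $100$ otherwise; if $x_i$ does not occur in $C_j$, then $o_i$ remains at distance $100$ at time $t_j$ on both branches. No observations are recorded, so the two branches remain equiprobable throughout. A distinguished object $o^*$ is given a deterministic trajectory sitting at Euclidean distance exactly $1$ from $q(t_j)$ for every $j$.

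With this setup, at any time $t_j$ the object $o^*$ is the nearest neighbor of $q$ iff every $o_i$ sits at distance $100$, which happens iff every literal of $C_j$ evaluates to false under the current random assignment, i.e., $C_j$ is falsified. Because the Markov chains of distinct objects are independent, the induced joint distribution on $\{0,1\}^n$ is uniform, hence
\begin{align*}
P\exists NN(o^*, q, \DB, T)
  &= P\bigl(\exists j\in\{1,\ldots,m\}: C_j \text{ is falsified}\bigr)\\
  &= 1 - P(\phi \text{ is satisfied}).
\end{align*}
Consequently $P\exists NN(o^*, q, \DB, T) < 1$ iff $\phi$ is satisfiable, completing the reduction.

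The main obstacle is instantiating the above inside the discrete state/transition formalism of Section~\ref{subsec:model}: the two branches of each $o_i$ must use pairwise disjoint states so that branch identities are preserved across $t_1,\ldots,t_m$, the $O(nm)$ distinct states used overall have to be assigned concrete coordinates in $\sd \subset \mathbb{R}^d$ realizing exactly the two prescribed distances to $q(t_j)$, and several objects must coexist on the same discrete alphabet without interfering with one another's transition matrices. These are routine bookkeeping tasks---for instance, one can embed in one dimension with $q(t_j)$ at the origin and dedicate a separate coordinate per pair $(i,j)$ for the close/far states of $o_i$ at time $t_j$---but writing them out cleanly is where most of the effort lies.
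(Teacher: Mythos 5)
Your reduction is essentially the paper's own proof: variables become objects with two equiprobable, state-disjoint branch trajectories, clauses become timestamps, literal satisfaction is encoded as being closer to $q$ than the candidate object (with variables absent from a clause forced to stay far on both branches), and NP-hardness follows from deciding whether $P\exists NN < 1$. The only differences are cosmetic (3SAT vs.\ $k$-SAT, explicit distances $0.5/1/100$ and $O(nm)$ dedicated states instead of the paper's four shared states), plus your observation that the probability equals $1$ minus the fraction of satisfying assignments, which in fact hints at the stronger \#P-hardness of exact computation.
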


\begin{proof}
$P\exists NN(o, q, \DB, T)$ is equal to $1-P(\neg\exists
  t \in T, \forall o' \in \DB:d(q(t),o(t)) \le d(q(t), o'(t)))$.
We will show that deciding if there exists a possible world for which the
expression:
  \begin{align}
  \label{eq:nphard}
    \neg\exists t \in T, \forall o' \in \DB:d(q(t),o(t)) \le d(q(t), o'(t))
  \end{align}
is satisfied is an NP-hard problem. (Note that this is a much easier problem
than computing the actual probability.) Specifically, we will reduce the well-known
  NP-hard $k$-SAT problem to the problem of deciding on the existence of a
  possible world for which Expression \ref{eq:nphard} holds.

For this purpose, we provide a mapping to convert a boolean formula
in conjunctive normal form to a Markov
  chain modeling the decision problem of Expression \ref{eq:nphard} in
  polynomial time. Thus, if the decision problem could be computed in PTIME,
then $k$-SAT  could also be solved in PTIME,
which would only be possible if P$=$NP.
  A $k$-SAT expression $E$ is based on a set of variables $X = \{x_1,
  x_2, \ldots, x_n\}$. The \textit{literal} $l_i$ of a variable $x_i$ is either $x_i$
  or $\neg x_i$ and a \textit{clause} $c = \bigvee\limits_{x_i \in \mathbb{C}} l_i$
  is a disjunction of literals where $\mathbb{C} \subseteq X$ and
  $|\mathbb{C}|<k$. Then $E$ is a conjunction of clauses: $E = c_1 \wedge c_2 \wedge \ldots \wedge c_m$.

  For our mapping, we will consider a simplified version of the P$\exists$NN
  problem, specifically (1) $q$ is a certain point, (2) $o$ is a
  certain point and (3) the state space $\sd$ of possible locations only includes 4 states. As
  illustrated in Figure \ref{fig:example}, compared to $o$, states $s_1$ and $s_2$
are closer to $q$ and states $s_3$ and $s_4$ are further
from $q$.\footnote{The states of $o$ and $q$
  are omitted for the sake of simplicity.} Therefore, if an uncertain
object is at states $s_1$ or $s_2$ then $o$ is not the NN of $q$.

  In our mapping, each variable $x_i \in X$ is equivalent to one uncertain
  object $o_i' \in \DB \setminus o$. Furthermore each disjunctive clause $c_j$
  is interpreted as an event happening at time $t=j$, i.e., the event $c_1$
  happens at time $t=1$,  $c_2$ happens at time $t = 2$ etc. Each clause
  $c_j$ can be seen as a disjunctive event that at least one object $o_i'$ at
  time $t=j$ is closer to $q$ than $o$ (in this case, $c_j$ is $true$).
  Therefore, the conjunction of all these events, i.e. expression $E =
  \mathop{\bigwedge}\limits_{1 \leq j \leq m} c_j$,
becomes true if the set of variables is chosen in a way that at each point in
  time, compared to $o$, at least one object is closer to $q$; this directly represents
  Expression \ref{eq:nphard}. However, in $k$-SAT, not every variable $x_i$
  (corresponding to $o_i'$) is contained in each term $c_j$ which does not
  correspond to our setting, since an uncertain object has to be
  \textit{somewhere} at each point in time.  To solve this problem, we extend
  each clause $c_j$, such that each variable $x_i$ is contained in $c_j$,
  without varying the semantics of $c_j$. Let us assume that $x_i$ is not
  contained in $c_j$. Then $c'_j = c_j \vee false = c_j \vee (x_i \wedge \neg
  x_i)$. This means that we can assume that object $o_i'$ is definitely not
  closer to $q$ than $o$ at time $t$.

  Let $l_i^j$ be the literal of variable $x_i$ in clause $c_j$. Based
  on the above discussion, we are able to construct for each object $o_i'$ two
  possible trajectories (worlds). The first one, based on the
  assumption that $x_i$ is true,
  transitions between states $s_2$ (if $l_i^j$ = true) and $s_4$ (if $l_i^j$ =
  false). The second one, based on the assumption that $x_i$ is set to false,
  transitions between states $s_1$ (if $l_i^j$ = true) and $s_3$ (if $l_i^j$ = false). Since these
  two trajectories can never be in the same state it is straightforward to
  construct a time-inhomogeneous Markov chain $\tm^o(t)$ for each object $o_i'$
  and each timestamp $j$.
  
  After the Markov chains for each uncertain object $o_i'$ in $\DB$ have been
  determined, we would just have to traverse them and compute the probability $P\exists
  NN(o, q, \DB, T)$. If this probability is $<$ 1, there would exist a solution to the
  corresponding $k$-SAT formula. However it is not possible to achieve this
  efficiently in the general case as long as $P \neq NP$. Therefore
  solving $P\exists NN$ in subexponential time is impossible. 
\end{proof}

\textbf{Example:}
Consider a set of boolean variables $X = \{x_1, \ldots, x_4\}$ and the following
formula: \[E = (\neg {x_1} \vee x_2 \vee x_3) \wedge (x_2 \vee
\neg {x_3} \vee x_4) \wedge (x_1 \vee \neg {x_2}) \] Therefore, we have
\[c_1 = (\neg {x_1} \vee x_2 \vee x_3), c_2 = (x_2 \vee \neg {x_3} \vee x_4)
\text{ and } c_3 = (x_1 \vee \neg{x_2})\] By employing the mapping discussed above,
we get the four inhomogeneous Markov chains illustrated in Figure
\ref{fig:example}. For instance, under the condition that $x_1$ is set to
$true$, the value of the literal $\neg {x_1}$ is false at $t =
1$ (in clause $c_1$) such that $o_1'$ starts in the state $s_4$. On the other
hand, if $x_1$ is set to $false$, then $o_1'$ starts in the state
$s_1$.

In the second clause $c_2$, since $x_1 \not \in \mathbb{C}_2$, the position of
$o_1'$ must not affect the result. Therefore, for both cases $x_1 = false$
and $x_1 = true$, $o_1'$ must be behind $o$. In the last clause $c_3$, if
$x_1 = true$ the object moves to state $s_2$. On the other hand, if $x_1 =
false$, the object moves to state $s_3$.

    \begin{figure}[t] \centering
    \includegraphics[width=0.8\columnwidth]{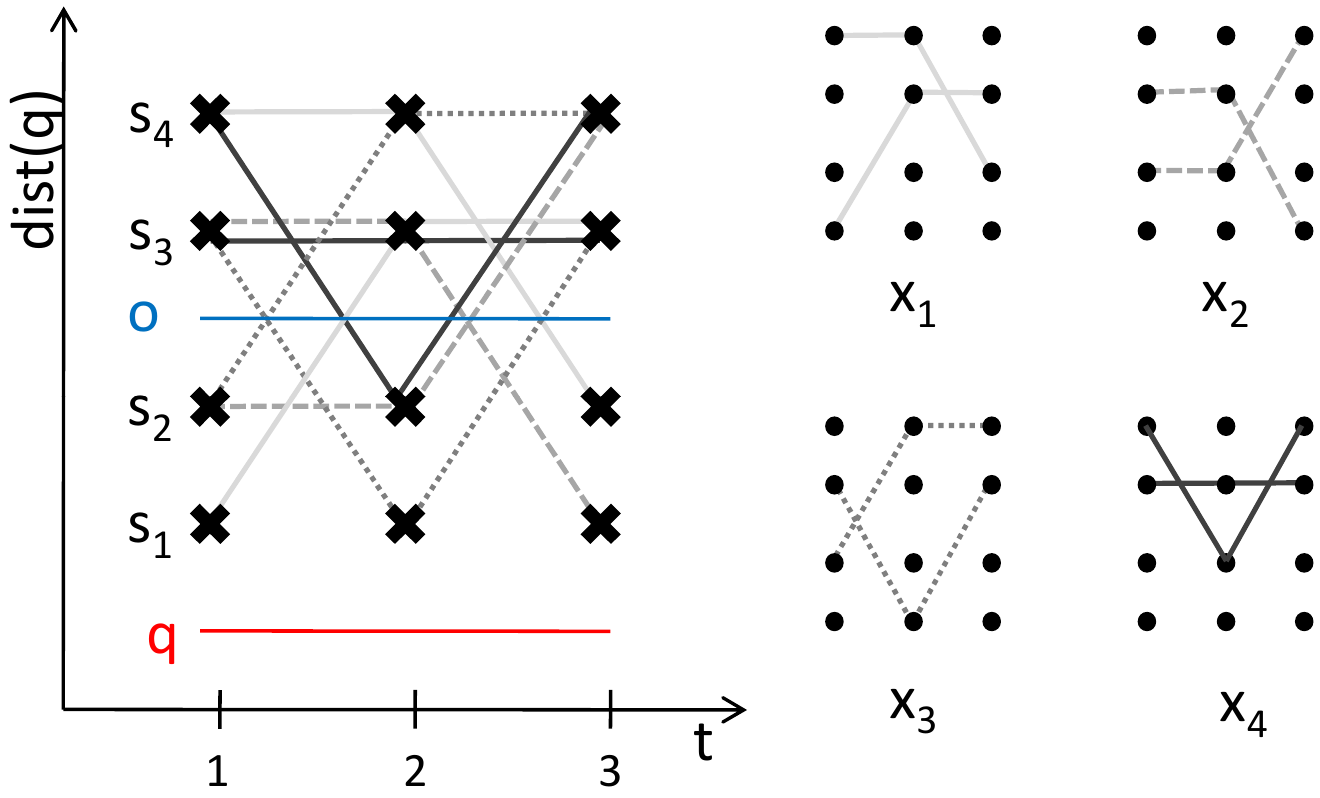}
      \caption{An example instance of our mapping of the 3-SAT problem to a set
      of Markov chains.}
      \label{fig:example}
      \vspace{-0.3cm}
    \end{figure}

\subsection{The P$\forall$NN Query}
Again we start our analysis by considering the single object
  probability $P\forall NN(o,q,\DB,T)$. The following lemma shows that
  this probability can be computed in PTIME.

  \begin{lemma}
  \label{lem:independent}
    The probability $P\forall NN(o,q,\DB,T)$ can be computed by considering
    each object $o' \in \DB$ independently, specifically $P\forall NN(o,q,\DB,T)
    = \prod\limits_{\dbobj' \in \DB} P\forall NN(o,q,\{o'\},T)$
  \end{lemma}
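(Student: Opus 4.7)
My plan is to expand $P\forall NN(o,q,\DB,T)$ using its definition and then exploit the mutual independence of the stochastic processes governing distinct objects. The starting observation is that the two universal quantifiers commute, so the event in the definition can be rewritten as an intersection over competing objects:
$$
P\forall NN(o,q,\DB,T) = P\Bigl(\bigcap_{o'\in\DB\setminus\{o\}} B_{o'}\Bigr),
$$
where $B_{o'} = \{\forall t \in T:\, d(q(t),o(t)) \leq d(q(t),o'(t))\}$ involves only the trajectories of $o$ and of $o'$.

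Next, I would invoke the standing modeling assumption from Section~\ref{subsec:model} that the Markov chains $\tm^{o'}$, together with their conditioning on the per-object observations $\Theta^{o'}$, are mutually independent across distinct objects. Conditional on the trajectory of $o$, each $B_{o'}$ depends only on the trajectory of a single other object, so the collection $\{B_{o'}\}_{o' \neq o}$ is conditionally independent. Hence
$$
P\Bigl(\bigcap_{o'} B_{o'} \,\Big|\, o\Bigr) = \prod_{o'} P(B_{o'} \mid o),
$$
and identifying each conditional factor with the two-object quantity $P\forall NN(o,q,\{o'\},T)$ yields the claimed product formula.

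The step I expect to be the main obstacle is moving from the pointwise (conditional on $o$) factorization to the unconditional product, since in general $\mathbb{E}_o[\prod_{o'} f_{o'}(o)] \neq \prod_{o'} \mathbb{E}_o[f_{o'}(o)]$. I anticipate the proof handling this by implicitly fixing $o$'s trajectory throughout the decomposition---treating $o$ as the reference candidate whose realization is threaded consistently through every factor $P\forall NN(o,q,\{o'\},T)$---so that the marginalization over $o$'s Markov chain commutes with the product. Making this implicit conditioning explicit and confirming it is consistent with the definition of $P\forall NN$ is the detail I would want to pin down when writing out the full proof.
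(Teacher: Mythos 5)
Your decomposition of the event into the intersection $\bigcap_{o'\in\DB\setminus\{o\}}B_{o'}$ is exactly the paper's first step, and the paper then finishes in one line: it declares the events $A_i=\{\forall t\in T:\ d(q(t),o(t))\le d(q(t),o'_i(t))\}$ mutually independent ``from stochastic independence of objects'' and factors the probability. So the obstacle you single out is not something you are missing relative to the paper --- it is precisely the step the paper's proof elides. Each $B_{o'}$ involves the random trajectory of $o$ itself, so the family $\{B_{o'}\}$ is independent only \emph{conditionally} on that trajectory, and, as you note, the unconditional product does not follow from the conditional one.

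Your anticipated repair (a fixed realization of $o$ implicitly threaded through every factor) is the right mathematical idea, but it is not consistent with the paper's definitions: $P\forall NN(o,q,\{o'\},T)$ is an unconditional probability that already marginalizes over $o$'s trajectory, and Algorithm~\ref{alg:forallNN} computes exactly that unconditional quantity via the joint matrix $\probm(t)=\statevec^{o}(t)\cdot\statevec^{o'}(t)^T$. As a result the lemma is false whenever $o$ is genuinely uncertain. Counterexample: let $T=\{t\}$, let $o(t)$ be a state closer to $q$ than every other reachable state with probability $1/2$ and a state farther than every other with probability $1/2$, and let $o'_1,o'_2$ sit deterministically at an intermediate state; then $P\forall NN(o,q,\DB,T)=1/2$ while each factor equals $1/2$, so the product is $1/4$. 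The identity that does hold is the conditional one,
\begin{equation*}
P\forall NN(o,q,\DB,T)=\sum_{\pi}P(o=\pi)\prod_{o'\in\DB\setminus\{o\}}P\bigl(\forall t\in T:\ d(q(t),\pi(t))\le d(q(t),o'(t))\bigr),
\end{equation*}
with the sum over the possible trajectories $\pi$ of $o$ taken \emph{outside} the product; it collapses to the stated product only when $o$ (and $q$) are certain, as in the reduction of Lemma~\ref{lem:nphard}. The detail you wanted to ``pin down'' therefore cannot be pinned down without changing either the statement or the definition of the per-object factors: the gap is in the lemma and in the paper's proof, not in your argument.
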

  \begin{proof}
 \begin{align*}
&    P\forall NN(o,q,\DB,T) &= \\
&    P(\forall \tint \in T, \forall o' \in \DB \setminus  o: d(q(t),o(t)) \le d(q(t),o'(t))) &= \\
&  P(\forall \tint \in T: d(q(t),o(t)) \le d(q(t),o'_1(t)) \wedge  \ldots &\\
&  \ldots \wedge d(q(t),o(t)) \le d(q(t),o'_{|\DB|}(t))  &= \\
&  \prod\limits_{\dbobj' \in \DB} P\forall NN(o,q,\{o'\},T) &
\end{align*}
  The last step follows from stochastic independence of
  objects and the resulting stochastic independence of events $A_i = \forall
  \tint \in T: d(q(t),o(t)) \le d(q(t),o'_i(t))$.
  \end{proof}

  Lemma \ref{lem:independent} allows us to further simplify the problem and we
  now only have to show how to compute $P\forall
  NN(o,q,\{o'\},T)$. This probability depends on two objects only and,
  thus, the computational complexity to compute $P\forall
  NN(o,q,\{o'\},T)$ is constant in $|\DB|$. In Section \ref{subsec:forallnnq} we provide an
  algorithm which computes $P\forall NN(o,q,\{o'\},T)$ in
  $O(|S|^3\cdot |T|)$ in the
  worst case, and in $O(|S|^2 \cdot  |T|)$ if the branching factor of the
  transition matrix is constant. This proves the polynomial complexity of the
  P$\forall$NN query.

\subsection{The PCNN Query}
The traditional CNN query \cite{PraWolChaDao97, TaoPapShe02},
retrieves the nearest neighbor of every point on a given query
trajectory in a time interval $T$. This basic definition usually returns $m <<
|T|)$ time intervals together having the same nearest neighbor. The main issue
when considering uncertain trajectories and extending the query definition
is the possibly large number of results due to highly overlapping and
alternating result intervals. In particular, considering Definition
\ref{def:pcnnq}, a PCNN query may produce an exponential number of results. This is because in the worst case each $T_i\subseteq T$ can be associated
with an object $o$ for which the probability $P\forall
NN(o,q,\DB,T_i)\geq\tau$, i.e., $2^T$ different $T_i$'s occur
in the result set.

To alleviate this issue, in Section \ref{subsec:PCNN_algorithm} we
propose a technique based on Apriori pattern mining to return the subsets of $T$
that have a probability greater than $\tau$.

\section{PNN Algorithms}
\label{sec:pnnalgorithms} In this section we present evaluation algorithms for
P$\forall$NN and PCNN queries.
Specifically, we focus on the computation of \linebreak $P\forall NN(o, q,
\DB, T)$ and $PCNN(o, q, \DB, T)$, respectively. For P$\exists$NN
queries no efficient algorithm exists as shown in Section
\ref{sec:existsnn}, thus an approximate (numerical) solution is
presented in Section \ref{sec:sampling}.

\subsection{The P$\forall$NN Query}
\label{subsec:forallnnq}

Note that Lemma \ref{lem:independent} allows us to compute $P\forall NN(o, q,
\DB, T)$ by considering the probabilities $P\forall NN(o, q,
\{o'\}, T)$ for each $o' \in \DB$ separately.

To derive an algorithm for
P$\forall$NN queries, we start by computing the probability that a candidate
object $o$ is the NN of query state or trajectory $q$ at a single point of time
$t$ in the query time window $T$. Let $\probm(\tint)$ be the joint probability matrix
of $o$ and $o'$, i.e., $\probm_{ij}(\tint) =$ \mbox{$P(o(\tint) = s_i \wedge o'(\tint)=s_j)$}, denoting the probability that object $o$ is in state $s_i$ at time $t$ \emph{and} object $o'$ is in state $s_j$ at time $\tint$.
  The matrix $\probm(\tint)$ can be computed by $\probm(\tint) = \statevec^{o}(\tint) \cdot
  \statevec^{o'}(\tint)^T$ due to independence of objects.
   From this joint probability matrix, we can derive the probability
  that $o$ is closer to $q$ than $o'$ at time $\tint$ as follows. We first
  define an indicator matrix $C(\tint)$ with
  \[C_{ij}(t) = \left\{\begin{array}{cl} 1, & \mbox{if } d(s_i, q(\tint)) \le
  d(s_j,q(\tint))\\
  0, & \mbox{otherwise} \end{array}\right.\]
  The matrix $C(\tint)$ describes for each state pair which state is closer to
  $q$. Therefore, we can determine the aggregated probability of $o$ being
  closer to $q$ than $o'$ by evaluating
  \begin{equation}
  \label{eq:hits_initial}
    H(\tint) = \probm(\tint)\pw C(\tint)
  \end{equation}
  where $\pw$ is the element-wise matrix multiplication. Then the following
  holds:
  \begin{equation}
  \label{eq:resultProb}
    P\forall NN(o,q,\{o'\},t) = \sum\limits_i\sum\limits_j(H_{ij}(\tint))
  \end{equation}
  This formula removes all possible worlds where $o'$ is closer to $q$ than $o$
  from the matrix $\probm(\tint)$ (Equality \ref{eq:hits_initial}) and sums up the
  remaining probabilities to get the result probability (Equality
  \ref{eq:resultProb}).
  To correctly incorporate later observations, we also have to consider all
  possible worlds not contributing to $P\forall NN(o,q,\{o'\},t)$. These possible worlds are stored in
  a drop matrix $D(t)$:
  \begin{equation}
  \label{eq:drops_initial}
  D(t) = \probm(\tint) - (\probm(\tint)\pw C(\tint))
  \end{equation}

  Based on the probability that $o$ is the NN of $q$ at a single time $t\in T$, we can
  compute the total probability that $o$ is the NN of $q$ during
  the complete time window $T =[t_s, t_e]$ by induction. The main
  idea is to maintain the matrix $H(t)$ over all times $t\in T =[t_s,
  t_e]$, which, contains, in each cell $H(t)_{ij}$ the probability
  that at time $t$, $o$ is located in state $s_i$ and $o'$ is
  located in state $s_j$ and $o$ has been closer to $q$ than $o'$
  during the whole interval $[t_s,t]$. Thus, the so-called
  \emph{Hit-Matrix} $H(t)_{ij}$ holds the probabilities of all
  possible worlds of $o$ and $o'$ where $o$ has always been closer
  to $q$ than $o'$. In addition to $H(t)_{ij}$, we further have to
  maintain a \emph{Drop-Matrix} $D(t)$, such that each cell $D(t)_{i,j}$ holds the
  probability that $o$ is located in state $s_i$ and $o'$ is
  located in state $s_j$ and $o$ was \emph{not} closer to $q$ than
  $o'$ at any time in $[t_s,t]$. These matrices partition all
  possible worlds into two classes: $H(t)$ represents all worlds still satisfying the
  query predicate at time $t$, and $D(t)$ contains all worlds that
  have already been pruned. Thus, for the first timestamp $\tint_s$, matrices
  $H(\tint_s)$ and $D(\tint_s)$) are computed; for the timestamps that follow,
  (1) $H(t_s)$ and $D(\tint_s)$) must be transitioned according to both
  transition matrices $\tm^{o}$ and $\tm^{o'}$ and (2) possible worlds have to
  be shifted into the correct matrix using $C(t)$. For each entry $H_{ij}(\tint)$ we have to aggregate over all states both objects $o$ and $o'$ can come from at the previous time step $\tint-1$:
  \begin{equation}
    H_{ij}(\tint) =
\sum\limits_k\sum\limits_l(H_{kl}(\tint-1)\cdot\tm^{o}_{ki}(\tint-1)\cdot\tm^{o'}_{lj}(\tint-1))
\cdot C_{i,j}(\tint),
  \end{equation}
  which is equal to:
\begin{equation}
\label{eq:exact_transition}
H^{tmp}(\tint) = [\tm^{o}(\tint-1)^T\cdot H(\tint-1)\cdot\tm^{o'}(\tint-1)]
\end{equation}
\begin{equation}
\label{eq:exact_transition_hit_to_drop}
H(\tint) = H^{tmp}(\tint)\pw C(\tint)
\end{equation}

A similar transformation has to be done for the \emph{Drop-Matrix}
$D(\tint)$ whereas shifting the possible worlds into $D(\tint)$
that have been hits in previous iterations but become drops at the
current time $\tint$, computed by $H^{tmp}(\tint)-H^{tmp}(\tint)\pw C(t)$:
\begin{equation}
\label{eq:exact_transition_d}
D^{tmp}(\tint) = \tm^{o}(\tint-1)^T\cdot D(\tint-1)\cdot\tm^{o'}(\tint-1)
\end{equation}
\begin{equation}
\label{eq:exact_transition_d2}
D(\tint) =D^{tmp}(\tint) + (H^{tmp}(\tint)-H^{tmp}(\tint)\pw C(t))
\end{equation}

  Until now, we did not consider observations for the computation of the result
  probability $P\forall NN(o,q,\{o'\},T)$. At each observation $\Theta^o_t$ and/or
  $\Theta^{o'}_t$, the joint probabilities $H(\tint)$ have to be {\em reweighed}
  according to the observation vectors $\theta^{o}_t$ and/or $\theta^{o'}_t$
  respectively. Specifically, given an observation $\Theta^o_x$ of object $o$
  at time $t = t_x^o$, the probabilities in $H(\tint)$
  have to be conditioned to the event of the observation ($o(t) = \theta^o_x$),
  which means $H(\tint)'$ ($H(\tint)$ after inferring the observation) should
  ultimately have the following form:
  $$
  H(\tint)_{ij}' = P(o(t) = s_i \wedge o'(t) = s_j \wedge H | o(t) = \theta^o_x)
  $$
  This expression denotes the probability that $o$ and $o'$ are in their
  respective states $s_i$ and $s_j$ and the set of possible worlds described by this
  probability being a \textit{hit} under the condition that $o$ was observed at
  time $t$ in state $\theta^o_x$. Clearly, the probability of $o$ being in state $s_i$ can only be non-zero if
  it is in state $\theta^o_x$. We can express this by introducing an indicator
  variable $I(s_i = \theta^o_x)$:
  $$
  H(\tint)_{ij}' = I(s_i = \theta^o_x) \cdot P(o(t) = s_i \wedge o'(t) = s_j \wedge
  H | o(t) = \theta^o_x)
  $$
  $$
  = I(s_i = \theta^o_x) \cdot P(o'(t) = s_j \wedge H | o(t) = \theta^o_x)
  $$
  By applying the law of conditional probability, we get:
  $$
  H(\tint)_{ij}' =\frac{I(s_i = \theta^o_x) \cdot P(o'(t) = s_j \wedge H \wedge o(t)
  = \theta^o_x)}{p(o(t)=s_i)}.
  $$
  While the nominator of this expression is already completely defined by $H(t)$
  and the observation $\theta^o_x$, for the denominator we also have to consider
  all possible worlds where $o$  is not a result of the $\forall$NN query:
  \begin{equation}
  \label{eq:exact_observation1}
   H(\tint)_{ij}' = \frac{I(s_i = \theta^o_x) \cdot H_{ij}}{\sum\limits_{X \in
   \{D,H\}}\sum\limits_{k=1}^{|S|} \sum\limits_{l=1}^{|S|} X_{kl}
   I(s_i = \theta^o_x)}
  \end{equation}
  Inferring an observation of
  object $o'$ ($o'(t) = \theta^{o'}_x$) can be derived similarly and results in
  the following formula:
  \begin{equation}
  \label{eq:exact_observation2}
   H(\tint)_{ij}' = \frac{I(s_j = \theta^{o'}_x) \cdot H_{ij}}{\sum\limits_{X
   \in \{D,H\}}\sum\limits_{k=0}^{|S|} \sum\limits_{l=0}^{|S|} X_{kl}
   I(s_j = \theta^{o'}_x)}
  \end{equation}
  Additionally, the same procedure has to be applied to $D(t)$. It can be
  further shown (by simple mathematical
  transformation) that both observations can be incorporated into $H(t)$ and
  $D(t)$ at the same time.

Algorithm \ref{alg:forallNN} is a pseudocode summarizing the findings of this
section. First, it computes the initial joint distribution
of $o$ and $o'$ (line \ref{alg_line:joint_distribution}) and the corresponding
hit and drop matrices (lines
\ref{alg_line:hits_initial}-\ref{alg_line:drops_initial}). Transitioning,
shifting and reweighting is performed for each point in time in lines
\ref{alg_line:transition_1} to \ref{alg_line:reweight_end}. The result probability is computed in line
\ref{alg_line:result_computation}.
The algorithm can be easily extended for the case where $T$ is a set
of disjoint intervals. In this case, for each point in time that does not
have to be considered as a query timestamp, the shifting operation (based on
matrix $C(t)$) needs not be evaluated. The remainder of the algorithm is not
affected by this extension. Using Algorithm \ref{alg:forallNN}, we are now able
to compute $P\forall NN(o,q,\{o'\},[t_s, t_e])$. For computing $P\forall
  NN(o,q,\DB,[t_s, t_e])$, an approach could apply this
  algorithm for all objects $o' \in \DB$.

  \begin{algorithm}
  \caption{$P\forall NN(o,q,\{o'\},[t_s, t_e])$}
  \begin{algorithmic}[1]
    \label{alg:forallNN}
        \STATE Generate $C(\tint_i)$ for all $t_i \in [t_s, t_e]$
        \STATE $\probm =
        \statevec^{o}(t_s)\cdot \statevec^{o'}(t_s)^{T}$
        \label{alg_line:joint_distribution}
        \STATE $H(\tint) = \probm(\tint)\pw C(\tint)$ \COMMENT{Eq.
        \ref{eq:hits_initial}}
        \label{alg_line:hits_initial} \STATE $D(t) = \probm(\tint) - (\probm(\tint)\pw
        C(\tint))$ \COMMENT{Eq. \ref{eq:drops_initial}}
        \label{alg_line:drops_initial}
        \FOR{$\tint = t_s + 1; t \le \tint_{e}; t++$}
          \STATE $H^{tmp}(\tint) = [\tm^{o}(\tint-1)^T \cdot  H(\tint-1) \cdot
          \tm^{o'}(\tint-1)]$ \COMMENT{Eq. \ref{eq:exact_transition}}
          \label{alg_line:transition_1}
          \STATE $H(\tint) = H^{tmp}(\tint)\pw C(t)$ \COMMENT{Eq.
          \ref{eq:exact_transition_hit_to_drop}}
          \STATE $D^{tmp}(\tint) = \tm^{o}(\tint-1)^T \cdot  D(\tint-1) \cdot
          \tm^{o'}(\tint-1)$ \COMMENT{Eq.
          \ref{eq:exact_transition_d}}
          \STATE $D(\tint) = D^{tmp}+(\tint)(H^{tmp}(\tint)-H^{tmp}(\tint)\pw
          C(t))$ \COMMENT{Eq.
          \ref{eq:exact_transition_d2}}
          \IF{$\exists \tint_i^{o}: \tint_i^{o} = \tint \vee \exists \tint_j^{o'}: \tint_j^{o'} = \tint$}
          \STATE reweigh according to Eq. \ref{eq:exact_observation1} or
          \ref{eq:exact_observation2}, respectively
          \ENDIF \label{alg_line:reweight_end}
        \ENDFOR
        \STATE $p = \sum\limits_i\sum\limits_j(H_{ij}(\tint_{e}))$ \COMMENT{Eq.
        \ref{eq:resultProb}} \label{alg_line:result_computation}
        \RETURN $p$
  \end{algorithmic}
\end{algorithm}

\subsection{The PCNN Query}
\label{subsec:PCNN_algorithm} Algorithm \ref{alg:PCNN} shows how
to compute, for a query trajectory $q$, a time interval $T$, a
probability threshold $\tau$, and an uncertain trajectory
$o\in\DB$ all $T_i \subseteq T$ for which $o$ is the nearest neighbor to
$q$ at all timestamps in $T_i$ with probability of at least $\tau$,
and the corresponding probabilities.

  \begin{algorithm}
  \caption{$PC_\tau NN$($q$, $o$, $\DB$, T, $\tau$)}
  \begin{algorithmic}[1]
    \label{alg:PCtauNN}
        \STATE $L_1 = \{(\{\tint\},P)| \tint\in T \wedge P\forall NN(o,q,\DB\setminus\{o\},
        \{\tint\}) \geq \tau\}$ \label{alg:PCNN_L1}
        \FOR{$k=2; L_{k-1}\not = \emptyset; k++$} \label{alg:PCNN_L2}
          \STATE $T^k = \{T_k\subseteq T| |T_k|=k\wedge \forall T'_{k-1} \subset
          T_k \exists (T'_{k-1}, p) \in L_{k-1}\}$ \label{alg:PCNN_L3}
          \STATE
          $L_k = \{(T_k,p) | T_k\in T^k \wedge P\forall NN(o,q,\DB\setminus\{o\},T_k) \ge \tau\}$ \label{alg:PCNN_L4}
        \ENDFOR
        \RETURN $\bigcup_k L_k$ \label{alg:PCNN_L7}
  \end{algorithmic}
  \label{alg:PCNN}
\end{algorithm}

We take advantage of the Apriori principle that for a $T_i$ to
qualify as a result of the PCNN query, all proper subsets of $T_i$ should
satisfy a P$\forall$NN query. In other words if $o$ is the P$\forall$NN of $q$
in $T_i$ with probability at least $\tau$, then for all $T_j\subset T_i$
$o$ should be the P$\forall$NN of $q$ in $T_j$ with
probability at least $\tau$.
Thus, we adapt the Apriori pattern-mining approach from
\cite{AgrSri94} to solve the problem as follows.
We start by computing the probabilities of all single points of time
to be query results (line~\ref{alg:PCNN_L1}).
Then, we iteratively consider the set $T^k$ of all timestamp sets with $k$
points of time by extending timestamp sets $T_{k-1}$ with an
additional point of time $t\in T\setminus T_{k-1}$, such that all $T'_{k-1}
\subset T_k$ have qualified at the previous iteration, i.e., we have $P\forall
NN(o,q,\DB\setminus\{o\},T'_{k-1}))\geq\tau$ (line~\ref{alg:PCNN_L3}).
The probability resulting from a
P$\forall$NN query is monotonically decreasing with the number of points in
time considered, i.e.,
P$\forall$NN($o,q,\db\setminus\{o\},T_k$) $\ge$
P$\forall$NN($\query,\db,T_{k+1}$) where $T_{k}\subset T_{k+1}$.
Therefore we do not have to further consider the set of points of time $T_k$
that do not qualify for the next iterations during the iterative construction of
sets of time points. Based on the sets of points in time $T_k$ constructed in
each iteration we compute the corresponding probability $P\forall
NN(o,q,\DB\setminus\{o\},T_k)$ to build the set of results of length $k$ (line~\ref{alg:PCNN_L4}) that are finally collected and reported as result
in line~\ref{alg:PCNN_L7}. The basic algorithm can be sped up by employing the
property that given P$\forall$NN($o,q,\db\setminus\{o\},T_1) =1$ the probability of P$\forall$NN($o,q,\db\setminus\{o\},T_1 \cup T_2) =$
P$\forall$NN($o,q,\db\setminus\{o\}, T_2)$.

Based on Algorithm \ref{alg:PCtauNN} it is possible to define a straightforward
algorithm for processing PCNN queries (by considering each object $o'$ from the
database). Again this approach can be improved by the use of an
appropriate index-structure (cf. Section \ref{sec:pruning}).

\vspace{-2mm} 
\section{Sampling Possible Trajectories}
\label{sec:sampling}

Based on the discussion in the previous sections, it is clear that
answering probabilistic queries over uncertain trajectory
databases has high run-time cost. Therefore,
like previous work \cite{JamXuWuetal08}, we study sampling-based
approximate solutions to improve query efficiency. In this
section, we first show that a traditional sampling approach is not
applicable for uncertain trajectory data as defined in Section
\ref{subsec:model}, as it does not account for all observations of
an object, resulting in a very large number of sample paths, which
are impossible given all observations. To tackle this issue, we
employ an approach that incorporates information about
observations directly into the Markov model, following a
forward-backward paradigm. Based on the resulting a-posteriori
models, traditional sampling approaches can be used to efficiently
and accurately estimate
   $PNN$ probabilities. On these samples, traditional NN
   algorithms for (certain) trajectories (\cite{FreGraPelThe07, GueBehXu10,
KolGunTso99, IweSamSmi03, PraWolChaDao97, TaoPapShe02}) can be
used to estimate NN probabilities.

\begin{figure}
\center
\vspace{-3mm}
\includegraphics[width=0.55\columnwidth]{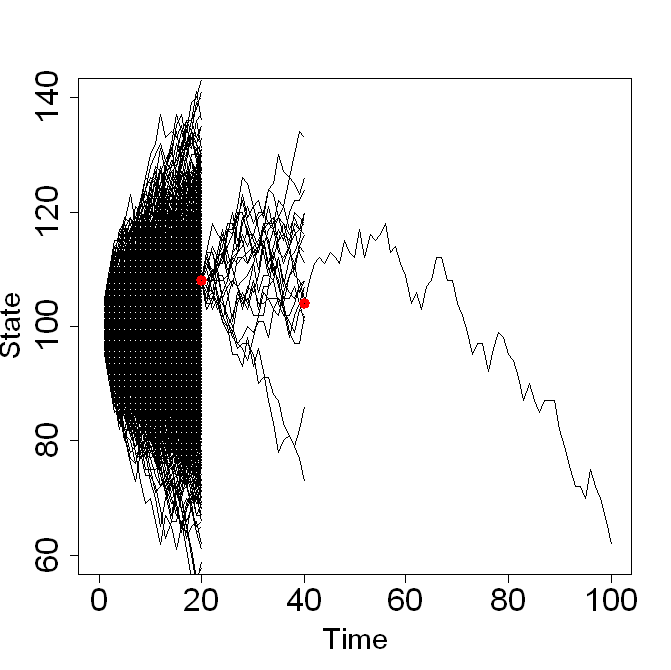}
\vspace{-4mm}
\caption{Traditional MC-Sampling.\label{fig:sampling}}
\vspace{-5mm}
\end{figure}

   \subsection{Traditional Sampling}\label{sec:sampling:naive}
   To sample possible trajectories of an object, a traditional
   Monte-Carlo approach would start by taking the first observation of
   the object, and then perform forward transitions using the 
   a-priori transition matrix. This approach however, cannot directly account for
   additional observations for latter timestamps. Figure
   \ref{fig:sampling} illustrates a total of 1000 samples drawn in a one-dimensional space. Starting at the first
   observation time $t=0$, transitions are performed using the
   a-priori Markov chain.
At the second observation at time $t=20$, the great majority of
   trajectories becomes inconsistent. Such impossible trajectories have
   to be dropped.
   At time $t=40$, even more trajectories become invalid;
   After this observation, only one out of a thousand samples
   remains possible and useful.

   Clearly, the number of trajectory generations required to obtain a
   single valid trajectory sample increases exponentially in the number of
   observations of an object, making this traditional Monte-Carlo
   approach inappropriate in obtaining a sufficient number of valid
   samples within acceptable time.

    \begin{figure}[t] \centering
    \includegraphics[width=\columnwidth]{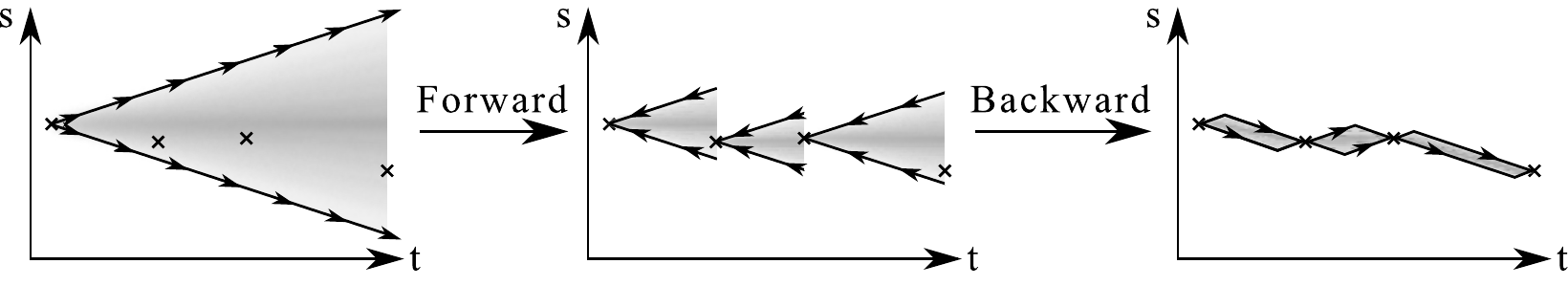}
      \vspace{-8mm}
      \caption{An overview over our forward-backward-algorithm.}
      \label{fig:fb_example}
      \vspace{-3mm}
    \end{figure}

\subsection{Efficient and Appropriate Sampling}
In a nutshell, our approach starts with the initial observation
$\theta^o_1$ at time $\tint_1^o$, and performs transitions for object
$o$ using the a-priori Markov chain of $o$ until the final
observation $\theta^o_{|\Theta^o|}$ at time $\tint_{|\Theta^o|}$
is reached. During this \emph{Forward}-run phase, Bayesian inference is
used to construct a time-reversed Markov-model $R^o(t)$ of $o$ at
time $t$ given observations in the past, i.e., a model that
describes the probability
$$
R^o_{ij}(t):=P(o(t-1)=s_j|o(t)=s_i,\{\theta^o_i|t^o_{i} < t\})
$$
of coming from a state $s_j$ at time $t-1$, given being at state
$s_i$ at time $t$ and the observations in the past.
Then, in a second \emph{Backward}-run phase, our approach traverses time
backwards, from time $\tint_{|\Theta^o|}$ to $\tint_1$, by employing
the time-reversed Markov-model $R^o(t)$ constructed in the forward
phase. Again, Bayesian inference is used to construct a new Markov
model $F^o(t-1)$ that is further adapted to incorporate knowledge
about observations in the future. This new Markov model contains
the transition probabilities
\begin{equation} \label{eq:m2}
F^o_{ij}(t-1):= P(o(t)=s_j|o(t-1)=s_i,\Theta^o).
\end{equation} for each point of time $t$, given all observations,
i.e., in the past, the present and the future.

As an illustration, Figure \ref{fig:fb_example}(a) shows the initial
model given by the a-priori Markov chain, using the first
observation only. In this case, a large set of ({\em time}, {\em location})
pairs can be reached with a probability greater than zero.
The adapted model after the forward phase (given by the a-priori Markov chain
and all observations), depicted in Figure \ref{fig:fb_example}(b), significantly
reduces the space of reachable ({\em time}, {\em location}) pairs and adapts respective
probabilities. The main goal of the forward-phase is to construct the necessary
data structures for efficient implementation of the
backward-phase, i.e.,
$R^o(t)$. This task is not trivial, since
the Markov property does
not hold for the future, i.e.,
the past is \emph{not}
conditionally independent of the future given the present.
Figure \ref{fig:fb_example}(c) shows the resulting model
after the backward phase. Next, both phases are elaborated in detail.

\subsubsection{Forward-Phase}
\label{sec:forward}

  To obtain the backward transition matrix $R^o(t)$, we can apply the theorem of Bayes as follows:

\begin{equation}\label{eq:bayes123}
R^o(t)_{ij} := P(o(t-1)=s_j|o(t)=s_i) =\end{equation}$$
\frac{P(o(t)=s_i|o(t-1)=s_j)\cdot P(o(t-1)=s_j)}{P(o(t)=s_i)}
$$
Computing $R^o(t)_{ij}$ is based on the
a-priori Markov chain only, and does not consider any information
provided by observations. To incorporate knowledge about past
observations into $R^o(t)_{ij}$,
let $past^o(t):=\{\theta^{\dbobj}_i|t^o_{i}< t\}$ denote the set
of observation temporally preceding $t$. Also, let
$prev^o(t):=argmax_{\Theta^{\dbobj}_i\in past^o(t)}t^o_{i}$
denote the most recent observation of $o$ at time $t$. Given all
past observations, Equation \ref{eq:bayes123} becomes conditioned
as follows:
\begin{lemma}
\label{lem:bayessimple}
\begin{equation}\label{eq:bayessimple}
R^o(t)_{ij} := P(o(t-1)=s_j|o(t)=s_i,past^o(t)) =\end{equation}$$
\frac{P(o(t)=s_i|o(t-1)=s_j,past^o(t)) \cdot
P(o(t-1)=s_j|past^o(t))}{P(o(t)=s_i|past^o(t))} $$
\end{lemma}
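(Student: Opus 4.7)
The plan is to derive Equation \ref{eq:bayessimple} as a direct application of Bayes' theorem under additional conditioning on the event $past^o(t)$. Since conditional probabilities $P(\cdot\,|\,past^o(t))$ form a probability measure on the underlying sample space (restricted to the event $past^o(t)$), Bayes' rule continues to hold with respect to this conditioned measure.

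Concretely, I would proceed as follows. First, I would identify the three events in play: $A = \{o(t-1)=s_j\}$, $B = \{o(t)=s_i\}$, and the conditioning event $E = past^o(t)$. Second, I would invoke the standard identity
\begin{equation*}
P(A\mid B,E) \;=\; \frac{P(A,B\mid E)}{P(B\mid E)}
\;=\; \frac{P(B\mid A,E)\cdot P(A\mid E)}{P(B\mid E)},
\end{equation*}
which follows from the definition of conditional probability $P(X\mid Y) = P(X\cap Y)/P(Y)$ applied twice and from the chain rule $P(A,B\mid E) = P(B\mid A,E)\cdot P(A\mid E)$. Substituting the events $A$, $B$, $E$ back in yields exactly Equation \ref{eq:bayessimple}.

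The only subtlety to address is well-definedness: each of the denominators $P(B\mid E) = P(o(t)=s_i\mid past^o(t))$ and $P(E)$ must be strictly positive. The latter is guaranteed since $past^o(t)$ is a set of actual recorded observations, and the former can be assumed without loss of generality because if $P(o(t)=s_i\mid past^o(t)) = 0$ then state $s_i$ is unreachable at time $t$ given the past observations and the corresponding column of $R^o(t)$ is irrelevant for sampling (any convention, e.g.\ $R^o(t)_{ij}=0$, is consistent). The main---and essentially only---obstacle is thus notational rather than mathematical: one must be careful to treat $past^o(t)$ uniformly as a conditioning event throughout, and not silently drop it when invoking the Markov property, since in the presence of observations the backward dynamics are no longer captured by the a-priori transition matrix alone. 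With this in mind, the proof is a one-line invocation of conditional Bayes.
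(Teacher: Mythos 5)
Your proof is correct and takes essentially the same route as the paper: both reduce the lemma to the conditional form of Bayes' theorem, $P(A\mid B,C)=\frac{P(B\mid A,C)\cdot P(A\mid C)}{P(B\mid C)}$. The paper merely cites this identity (deferring its verification to the extended version), whereas you additionally derive it from the definition of conditional probability and note the positivity requirement on the denominators, which is a welcome but minor elaboration.
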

\begin{proof}
Equation \ref{eq:bayessimple} uses the conditional theorem of
Bayes\\ $P(A|B,C)=$ $\frac{P(B|A,C)\cdot P(A|C)}{P(B|C)}$, the
correctness of which is shown in the extended version of this
paper (\cite{NieZueEmrRenetal13}).
\end{proof}
 The conditional probability $P(o(t)=s_i|o(t-1)=s_j,past^o(t))$ can be rewritten as
 $P(o(t)=s_i|o(t-1)=s_j)$, exploiting the Markov property.

 Both priors $P(o(t-1)=s_j|past^o(t))$ and $P(o(t)=s_i|past^o(t))$ can
 be rewritten as $P(o(t-1)=s_j|prev^o(t))$ and
 $P(o(t)=s_i|prev^o(t))$ respectively, by exploiting the Markov property; i.e.,
 given the position at some time $t$, the
 position at a time $t^+>t$ is conditionally independent of the
 position at any time $t^-<t$.
Thus, Equation \ref{eq:bayessimple} can be rewritten as $R^o(t)_{ij}=$
\begin{equation}\label{eq:bayessimple2}
\frac{P(o(t)=s_i|o(t-1)=s_j) \cdot
P(o(t-1)=s_j|prev^o(t))}{P(o(t)=s_i|prev^o(t))}
\end{equation}
The probability $P(o(t)=s_i|o(t-1)=s_j)$ is given directly by the
definition of the  a-priori Markov chain $\tm^o(t)$ of $o$. Both
priors $P(o(t-1)=s_j|prev^o(t))$ and
 $P(o(t)=s_i|prev^o(t))$ can be
 computed by performing time transitions from observation
 $prev^o(t)$, also using the a-priori Markov chain $\tm^o(t)$. For each element $r_{ij}\in
 R^o(t)_{ij}$, and each point of time $t\in
 [\tint_{1},\tint_{|\Theta^o|}]$, these priors can be computed in
 a single run, iteratively performing transitions from $\tint_{1}$ to
 $\tint_{|\Theta^o|}$.
 During this run, all backward probabilities $P(o(t-1)=s_j|o(t)=s_i, past^o(t))$ are computed using Equation \ref{eq:bayessimple2} and memorized in the inhomogeneous matrix $R^o(t)$.
  During any iteration of the forward algorithm, where a new observation
$present^o(t):=\Theta^o_t\in\Theta^o$ is reached, the information
of this observation has to be incorporated into the model. This is
done trivially, by setting $P(o(t)=s_i|past^o(t),present^o(t))$ to one
if $s_i$ is the state $\theta$ observed by $present^o(t)$ and to
zero otherwise.

  \subsubsection{Backward Phase}
    During the backward phase, we traverse time backwards using the reverse transition matrix $R^o(t)$, to propagate
    information about future observations back to past points of time, as
    depicted in Figure \ref{fig:fb_example}(c). During this traversal,
    we again obtain a time reversed matrix $F^o(t)$,
    describing state transitions between adjacent points of time,
    given observations in the future.
Due to this second reversal of time, matrix $F^o(t)$ also contains
    adapted transition probabilities in the forward direction of
    time. 
    Thus, matrix $F^o(t)$ represents a Markov model which corresponds to the
    desired a-posteriori model: It contains the probabilities of performing a state transition between state $s_i$ and $s_j$ at time $t$ to time $t+1$, incorporating knowledge of observations
   in both the past and
    the future. In contrast, the a-priori Markov model $M^o(t)$ only
    considers past observations. We now discuss the details of this phase.

 By definition of $R^\dbobj(t)$ as the reverse transition matrix, the following
 reverse Markov property holds for each element $R^\dbobj_{ij}$ of $R^\dbobj$:
 $$
P(\dbobj(\tint)=s_j|\dbobj(\tint+1)=s_i,\dbobj(\tint+2)=s_{\tint+2},...,\dbobj(\tint+k)=s_{\tint+k})=
$$
\begin{equation} \label{eq:reversemarkov}
P(\dbobj(\tint)=s_j|\dbobj(\tint+1)=s_i)
\end{equation}

    As an initial state for the backward phase,
    we use the state vector
    corresponding to the final observation $\Theta^o_{|\Theta^o|}$ at time $t^o_{|\Theta^o|}$ at state $\theta^o_{|\Theta^o|}$.
    This way, we take the final observation as given, making any further probabilities that
    are being computed conditioned to this observation.
    At each point of time $t\in[\tint_{|\Theta^o|},\tint_1]$ and each state $s_i\in S$, we compute
    the probability that $o$ is located at state $s_i$ at time $t$
    \emph{given} (conditioned to the event) that the observations
    $future^o(t):=\{\theta^\dbobj_i|t^o_{i}>t)\}$ at times later than $t$ are
    made. In the following, let $next^o(t)=argmin_{\Theta^o_i\in
    future^o(t)}(t^o_i)$ denote the soonest observation of $o$
    after time $t$.
 To obtain $F^o(t)$, we once again
    exploit the theorem of Bayes:
    $$
F^o_{ij}(t) := P(o(t+1)=s_j|o(t)=s_i,\Theta^o)=$$
\begin{equation}
\label{eq:forward_matrix}
    \frac{P(o(t)=s_i|o(t+1)=s_j,\Theta^o)\cdot
    P(o(t+1)=s_j|\Theta^o)}{P(o(t)=s_i|\Theta^o)}
\end{equation}
    By exploiting the reverse Markov property (c.f. Equation
    \ref{eq:reversemarkov}), we can rewrite $P(o(t)=s_i|o(t+1)=s_j,\Theta^o) =$ $P(o(t)=s_i|o(t+1)=s_j, past(t+1))$
which is given by matrix $R^\dbobj(t)$. Both priors
$P(o(t+1)=s_j|\Theta^o)$ and $P(o(t)=s_i|\Theta^o)$ can be
rewritten as $P(o(t+1)=s_j|prev^o(t+1), present^o(t+1), next^o(t+1))$ and
$P(o(t)=s_i|prev^o(t), present^o(t), next^o(t))$, exploiting the traditional
Markov property in forward and Equation \ref{eq:reversemarkov} in backward
direction. These probabilities can be computed
as follows: We start at $t=\tint_{|\Theta^o|}$, performing transitions
backwards using backward transition matrix $R^o(t)$ until time
$t=\tint_{|\Theta^o|-1}$ is reached. For each intermediate point
of time $t$, the distribution vector $\statevec^\dbobj(\tint)$ is obtained.
Each probability $\statevec^\dbobj(\tint)_i$ in this vector corresponds
to the probability of $o$ being located at state $s_i$ at time $t$. These probabilities
are conditioned to $\Theta^o_{|\Theta|}=next^o(t)$, due to being
started according to $\Theta^o_{|\Theta|}$. Furthermore, these
probabilities are conditioned to $prev^o(t)\in past^o(t)$ due to
usage of matrix $R^o(t)$. At time $t^o_{|\Theta|-1}$, the state
vector is adapted using this observation. This procedure is
iterated until the first observation $\Theta^o_1$ is reached to derive the
probabilities $P(o(t+1)=s_j|\Theta^o)$ and $P(o(t)=s_i|\Theta^o)$.

\vspace{2mm}
\subsubsection{Sampling Process}
  \begin{algorithm}
  \caption{AdaptTransitionMatrices($\dbobj$)}
\small
  \begin{algorithmic}[1]
  \label{alg:transitionMatrices}
        \STATE \COMMENT{Forward-Phase}
        \STATE $\statevec^\dbobj(\tint^o_1) = \theta^o_1$
        \FOR{$\tint=\tint^o_1+1$; $\tint \le \tint^o_{|\Theta^o|}$;
        $\tint\mbox{++}$} \STATE $X'(t) = \tm^\dbobj(\tint-1)^T\cdot diag(\statevec^\dbobj(\tint-1))$
            \label{alg1:forward_mult}
            \STATE $\forall i \in \{1\ldots |\sd|\}: \statevec^\dbobj(\tint)_i =
            \sum\limits_{j=1}^{|S|}X'_{ij}(t)$ \label{alg1:forward_rowsum}
            \STATE $\forall i,j \in \{1\ldots |\sd|\}: R^\dbobj(\tint)_{ij} =
            \frac{X'_{ij}(t)}{\statevec^\dbobj(\tint)_i}$
            \label{alg1:forward_R}
            \IF {$t\in\Theta^o$}
\STATE $\statevec^\dbobj(t) =
            \theta^o_t$ \COMMENT{Incorporate observation} \label{alg1:observation}
\ENDIF
        \ENDFOR
         \STATE
         \COMMENT{Backward-Phase}
         \FOR{$\tint=\tint^o_{|\Theta^o|}-1; \tint \ge \tint^o_1;
         \tint\mbox{\hspace{0.02cm}-\hspace{0.02cm}-}$} \label{alg1:back_start}
         \STATE $X'(t) = R^\dbobj(\tint+1)^T\cdot diag(\statevec^\dbobj(\tint+1))$
         \STATE $\forall i \in \{1\ldots |\sd|\}: \statevec^\dbobj(\tint)_i =
         \sum\limits_{j=1}^{|S|}X'_{ij}(t)$ \label{alg1:backward_rowsum}
         \STATE $\forall i,j \in \{1\ldots |\sd|\}: F^\dbobj(\tint)_{ij} =
            \frac{X'_{ij}(t)}{\statevec^\dbobj(\tint)_i}$\label{alg1:back_end}
         \ENDFOR
\RETURN $F^o$
\end{algorithmic}
 \end{algorithm}
Algorithm
  \ref{alg:transitionMatrices} summarizes the construction of the
  transition model for a given object $o$.
  In the forward phase, the new distribution vector $\statevec^\dbobj(t)$ of $o$ at time $t$ and backward probability matrix $R^\dbobj(t)$ at time $t$ can be efficiently derived from the
  temporary matrix $X'(t)$, computed in Line \ref{alg1:forward_mult}.
  The equation is equivalent to a simple transition at time $t$, except that the state vector
  is converted to a diagonal matrix first. This trick allows to
  obtain a matrix describing the joint distribution of the position of $o$ at time
  $t-1$ and $t$. Formally, each entry $X'(t)_{i,j}$ corresponds to the
  probability $P(o(t-1)=s_j \wedge o(t)=s_i | past^\dbobj(t))$ which is
  equivalent to the \textit{numerator} of Equation
  \ref{eq:bayessimple}.\footnote{The proof for this transformation $P(A\cap B
  | C) = P(A|C)\cdot P(B|A,C)$ can be derived analogously to Lemma
  \ref{lem:bayessimple}.} To obtain the denominator of Eq.
  \ref{eq:bayessimple} we first compute the row-wise sum of $X'(t)$ in Line
  \ref{alg1:forward_rowsum}.
  The resulting vector directly corresponds to $\statevec^\dbobj(t)$,
  since for any matrix $A$ and vector $x$ it holds that $A\cdot x = rowsum(A
  \cdot diag(x))$. By employing this rowsum operation, only one matrix
  multiplication is required for computing $R^\dbobj(t)$ and
  $\statevec^\dbobj(t)$.

  Next, the elements of the temporary matrix $X'(t)$ and the elements of
  $\dbobj.\statevec(t)$ are normalized in Equation
  \ref{eq:bayessimple}, as shown in Line \ref{alg1:forward_R} of the
  algorithm.
  
  Finally, possible observations at time $t$ are integrated in Line
  \ref{alg1:observation}.
  In Lines \ref{alg1:back_start} to \ref{alg1:back_end}, the same
  procedure is followed in time-reversed direction, using the
  backward transition matrix $R^o(t)$ to compute the a-posteriori
  matrix $F^o(t)$.

  The overall complexity of this algorithm is $O(|T|\cdot|\sd|^2)$. The initial
  matrix multiplication requires $|\sd|^2$ multiplications. While the complexity of a
  matrix multiplication is in $O(|\sd|^3)$, the multiplication of a
  matrix with a diagonal matrix, i.e., $\tm^T\cdot s$ can be rewritten as
  $\tm^T_i\cdot s_{ii}$, which is actually a multiplication of a vector with a
  scalar, resulting in an overall complexity of $O(|\sd|^2)$. Re-diagonalization
  needs $|\sd|^2$ additions as well, such as re-normalizing the transition matrix,
  yielding $3 \cdot |T| \cdot|\sd|^2$ for the forward phase. The backward
  phase has the same complexity as the forward phase, leading to an overall
  complexity of $O(|T|\cdot|\sd|^2)$.

Once the transition matrices $F^o(t)$ for each point of time $t$ have been
computed, the
actual sampling process is simple: For each object $o$, each
sampling iteration starts at the initial position $\theta^o_1$ at
time $t^o_1$. Then, random transitions are performed, using
$F^o(t)$ until the final observation of
$o$ is reached. Doing this for each object $o\in\DB$, yields a
(certain) trajectory database, on which exact NN-queries can be
answered using previous work.
Since the event that an object $o$ is a $\forall$-NN ($\exists$-NN) of $q$ is a
binomial distributed random variable, we can use methods from statistics, such
as the Hoeffding's inequality (\cite{Hoe63}) to give a bound of the
estimation error, for a given number of samples.

\vspace{-0.2cm}
\section{Spatial Pruning}
\label{sec:pruning} Pruning objects in probabilistic NN search can
be achieved by employing appropriate index structures for querying
uncertain spatio-temporal data. In this work, we use the
\emph{UST-tree} \cite{EmrKriMamRenZue12}. In this section, we briefly
summarize the index and show how it can be employed to efficiently
prune irrelevant database objects, identify result candidates, and
find influence objects that might affect the $\forall$NN probability
of a candidate object.

\textbf{The UST-Tree. } Given an uncertain spatio-temporal object $o$, the main
idea of the UST-tree is to conservatively approximate the set of possible
({\em location}, {\em time}) pairs that $o$ could have possibly visited, given its
observations $\Theta^o$.
In a first
approximation step, these ({\em location}, {\em time}) pairs, as well as the
possible ({\em location}, {\em time}) pairs defined by $\Theta^o_{i}$ and
$\Theta^o_{i+1}$ are minimally bounded by rectangles. Such a
rectangle, for observations $\Theta^o_{i}$ and $\Theta^o_{i+1}$ is
defined by the time interval $[t^o_i,t^o_{i+1}]$, as well as the
minimal and maximal longitude and latitude values of all reachable
states. \begin{example} Consider Figure \ref{fig:example_pruning},
where four objects objects $A$, $B$, $C$ and $D$ are given by
three observations at time $0$, $5$ and $10$. For each object, the
set of possible states in the corresponding time intervals $[0,5]$
and $[5,10]$ is approximated by two minimum bounding rectangles.
For illustration, the set of possible states at each point
of time is also depicted by dashed rectangles.
\end{example}
The UST-tree indexes the resulting rectangles using an $R^*$-tree
(\cite{BecKriSchSee90}). We now discuss how such an index structure can
be used for the evaluation of P$\forall$NN and P$\exists$NN queries.

\textbf{Pruning candidates of P$\forall$NN queries.}
For a P$\forall$NN query, an object must have a non-zero probability
of being the closest object to $\query$, for each timestamp falling
into the query interval. As a consequence, to find candidate objects
for the P$\forall$NN query, we have to consider for all objects
$o\in \db$ whether for each $\tint\in \query.T$ there does not exist
another object $o'\in \db$ such that
$d_{min}(o(t),\query(t))>d_{max}(o'(t),\query(t))$. Here,
$d_{min}(o(t),\query(t))$ ($d_{max}(o(t),\query(t))$) denotes the
minimum (maximum) distance between the possible states of $o(t)$ and
$\query(t)$. Thus, the set of candidates $C_{\forall}(q)$ of a
P$\forall$NN is defined as:
\begin{gather*}
C_{\forall}(\query)=\{o\in\DB|\forall t\in \query.T:
d_{min}(o,\query) \le min_{o'\in\DB}d_{max}(o',\query)\}
\end{gather*}

Applying spatial pruning on
the leaf level of the UST-tree, we have to apply the $d_{min}$ and
$d_{max}$ distance computations on the minimum bounding rectangles
on the leaf level in consideration of the time intervals
associated with these leaf entries. In our example, given the
query point $\query$ with $\query.T=[2,8]$, only object $A$ is a
candidate, since $d_{min}(q(t),A(t))\leq d_{max}(q(t),o(t))$ for
all $o\in\DB$ in the time intervals [0,5] and [5,10], both
together covering $\query.T$. Objects $B$, $C$ and $D$ can be
safely pruned.

It is important to note that pruned objects, i.e., objects not
contained in $C_{\forall}(\query)$ may still affect the
$\forall$NN probability of other objects and even may prune other
objects. For example, though object $B$ is not a candidate, it
affects the $\forall$NN probability of all other objects and
contributes to prune possible worlds of object $A$, because
$d_{min}(\query(t),A(t))<d_{max}(\query(t),B(t))$ $\forall
t\in[5,10]$. All objects having at at least one timestamp
$\tint\in\query.T$ a non-zero probability being the NN of $\query$
may influence the $\forall$NN probability of other objects. Since
we need these objects for the verification step of both the exact
and the sampling algorithms, we have to maintain them in an
additional list $I_{\forall}(\query)=$
\begin{gather*}
\{o\in\DB|\exists t\in T: d_{min}(o(t),
\query(t)) \le min_{o'\in\DB}d_{max}(o'(t),\query(t))\}
\end{gather*}

To perform spatial pruning at the non-leaf level of the UST-tree, we
can analogously apply $d_{min}$ and $d_{max}$ on the MBRs of the
non-leaf level.

    \begin{figure}[t] \centering
    \includegraphics[width=\columnwidth]{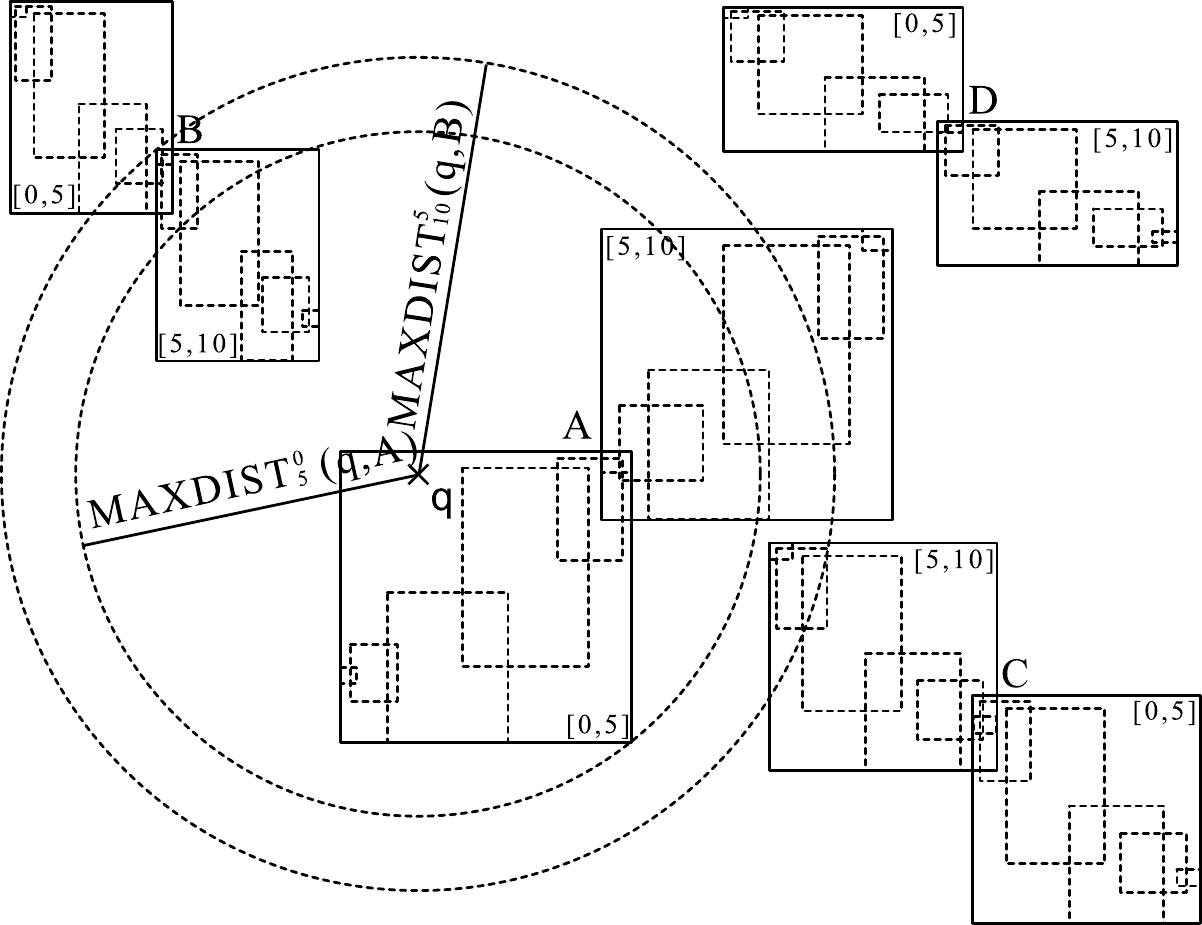}
     \vspace{-0.2cm}
      \caption{Spatio-Temporal Pruning Example. }\vspace{-0.2cm}
      \label{fig:example_pruning}

    \end{figure}

\textbf{Pruning for the P$\exists$NN query.} Pruning for the
P$\exists$NN query is very similar to that for the P$\forall$NN
query. However, we have to consider that an object being the nearest
neighbor for a single point in time is already a valid query result.
Therefore, no distinction is made between \textit{candidate}s and
\textit{influence object}s. Every pruner can be a valid result of
the P$\exists$NN query, such that each object with a $d_{min}$
smaller than the pruning distance has to be refined. The remaining
procedure of the P$\exists$NN-algorithm is equivalent to
P$\forall$NN-pruning.

\section{Experimental Evaluation}\label{sec:experiments}
  \textbf{Setup } Our experimental evaluation focuses on P$\forall$NN and PCNN
  queries, which have an efficient exact
  solution.

We conducted a set of experiments to verify both the effectiveness and efficiency of
the proposed solutions,
using
  a desktop computer having an Intel i7-870 CPU at 2.93 GHz and 8GB of RAM. All
  algorithms were implemented in C++ and integrated into the UST framework. The
  framework and a video illustrating the datasets can be found on the
  project page \cite{ProjectPage13}.
 
  \textbf{Artificial Data.} Artificial data for our experiments was created in
  three steps: state space generation, transition matrix construction and object
  creation. First, the data generator constructs a two-dimensional Euclidean
  \textit{state space}, consisting of $n$ states. Each of these states is drawn
  uniformly from the $[0,1]^2$ square. Then, in order to construct a transition
  matrix, we derive a graph by introducing edges between any point $p$ and its
  neighbors having a distance less than $r=\sqrt{\frac{b}{n*\pi}}$ with $b$
  denoting the average branching factor of the underlying network.
  This parameter ensures that the degree of a node does not
  depend on the number of states in the network. Each edge in the
  resulting network represents a non-zero entry in the transition matrix. We
  then set the transition probability of this entry indirectly proportional to
  the distance between the two connected vertices.

  \textbf{Real Data.} We also generate a data set is generated from a set
  of GPS trajectories of taxis in the city of Beijing \cite{YuaZheXieSun11} and has
  been provided by \cite{EmrKriMamRenZue12}.
  The data set was generated using the techniques of \cite{CheSheZhou11} to
  obtain both a set of possible states (corresponding to crossroads) and a
  transition matrix reflecting the possible movements of the cabs. This process yields
a state space consisting of about 3000 states and
  the corresponding transition matrices and direct edges between states.
  We assume that a-priori, all objects utilize the same Markov
  model $M$.

 \textbf{Observation Data.}
 To create observations of an object $o$, we sample a sequence of states and
 compute the shortest paths between them, modeling the motion of $o$ during
 its whole lifetime (which we set to 100 steps by default). To add uncertainty
 to the resulting path, every $l^{th}$ node, $l=i*v$, $v \in [0,1]$, of this
 trajectory is used as an observed state. $i$ denotes the
 time between consecutive observations and $v$ denotes a lag parameter
 describing the extra time that $o$ requires due to deviation from the shortest path; the
 smaller $v$, the more lag is introduced to $o$'s motion.
The resulting uncertain trajectories were distributed over the database time
horizon (default: 1000 timestamps) and indexed by a UST-tree \cite{EmrKriMamRenZue12}. As a
pruning step for query evaluation, we employed the UST-tree's MBR filtering
approach described in Section \ref{sec:pruning}. Query states were uniformly
drawn from the underlying state space.

  \subsection{Evaluation: P$\forall$NN Queries}
    For performance analysis, the sampling approach (Section \ref{sec:sampling}) is divided into
    two phases. In the first phase the trajectory sampler (\emph{TS}) is
    initialized (the adapted transition matrices are computed according to
    Algorithm \ref{alg:transitionMatrices}). This phase can be
    performed once and used for all queries. In
    the second phase, the actual sampling (\emph{SA}) of 10k trajectories (per
    object) is performed. The exact approach is denoted as
    \emph{EX}. In our default setting during efficiency analysis we
    set the number of objects $|\mathcal{D}|=10k$, the number of states $N = |\mathcal{S}| = 100$k,
    average branching factor of the synthetic graph $b=6$, probability threshold $\tau=0$ and the length of the query interval $|T| = 10$.
    These parameters lead to a total of $110k$ observations (11 per object)
    and $100k$ diamonds for the UST-index.

\textbf{Varying $N$.}
\begin{figure}[t] \centering
\includegraphics[width=\columnwidth]{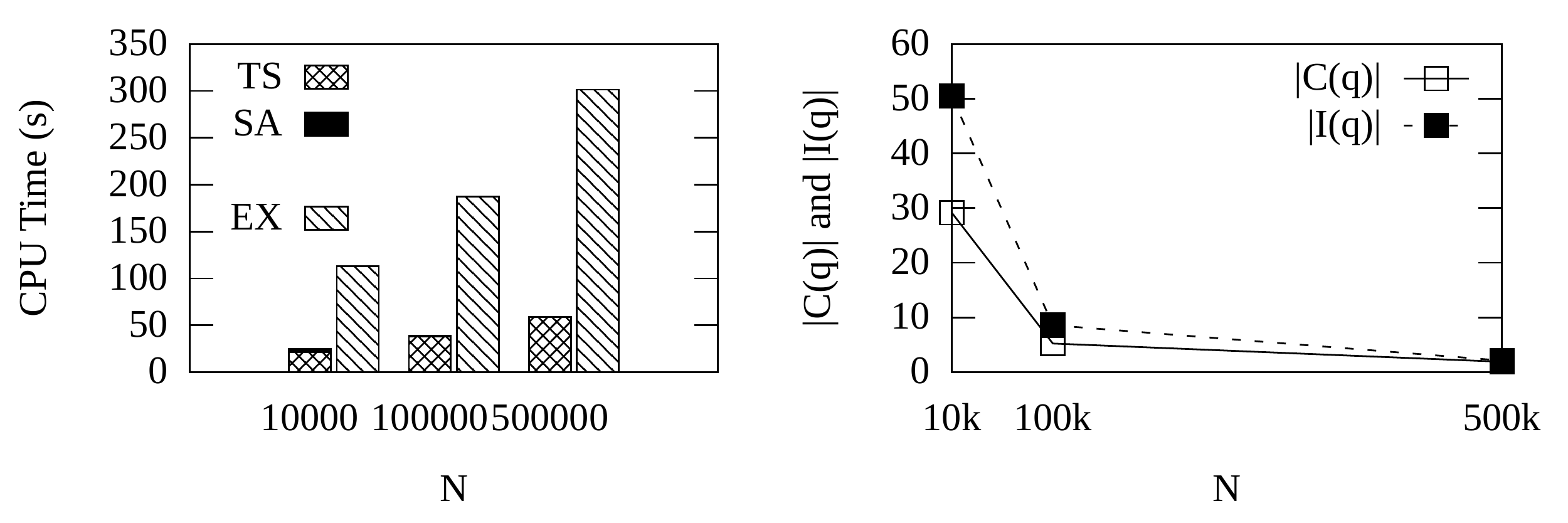}
          \vspace{-2.4em}
          \caption{Varying the Number of States}
          \label{fig:num_states}
            \vspace{-0.1cm}
        \end{figure}
In the first experiment (Figure \ref{fig:num_states}) we
investigate the effect of an increasing state space size $N$,
while keeping a constant average branching factor of network
nodes. This effect corresponds to expanding the underlying state space,
e.g., from a single country to a whole continent.
In Figure \ref{fig:num_states} (left) we can see that increasing $N$
leads to a sublinear increase in the run-time of both the sampling
approaches and the exact solution. This effect can be mostly
explained by two aspects. First, the size of the a-priori model
increases linearly with $N$, since the number of non-zero
elements of the sparse matrix $M$ increases linearly with $N$. This
leads to an increase of the time complexity of matrix operations. At the same
time, the number of candidates $|C(t)|$ and influence objects $I(t)$ decreases
significantly as seen in Figure \ref{fig:num_states} (right) because the degree
of intersection between objects decreases with a higher number of states, making
pruning more effective. The actual sampling cost $SA$, which is too small to be
noticeable in Figure \ref{fig:num_states} (left) decreases from $4$s for 10k states to 0.7s
for 500k states due to the smaller number of candidates and
influence objects.

\textbf{Varying $b$. }
   \begin{figure}[t] \centering
\includegraphics[width=\columnwidth]{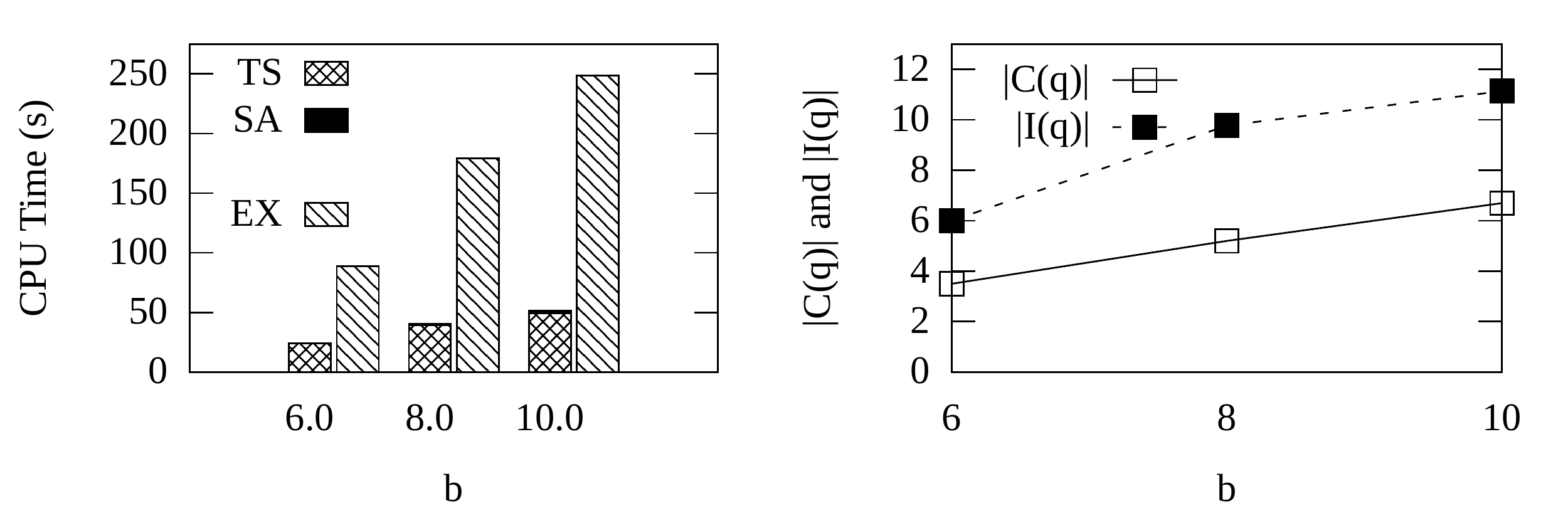}
          \vspace{-2.4em}
          \caption{Varying the Branching Factor}
          \label{fig:branching_factor}
            \vspace{-0.1cm}
        \end{figure}
Figure \ref{fig:branching_factor} evaluates the branching factor
$b$, i.e., the average degree of each network node. As expected,
Figure \ref{fig:branching_factor} (left) shows that an increasing
branching factor yields a higher run-time of all approaches due to
a higher number of non-zero values in vectors and matrices, making
computations more costly. Furthermore, in our setting, a larger
branching factor also increases the number of influence objects,
as shown in Figure \ref{fig:branching_factor} (right).

\textbf{Varying $|\db|$.}
   \begin{figure}[t] \centering
\includegraphics[width=\columnwidth]{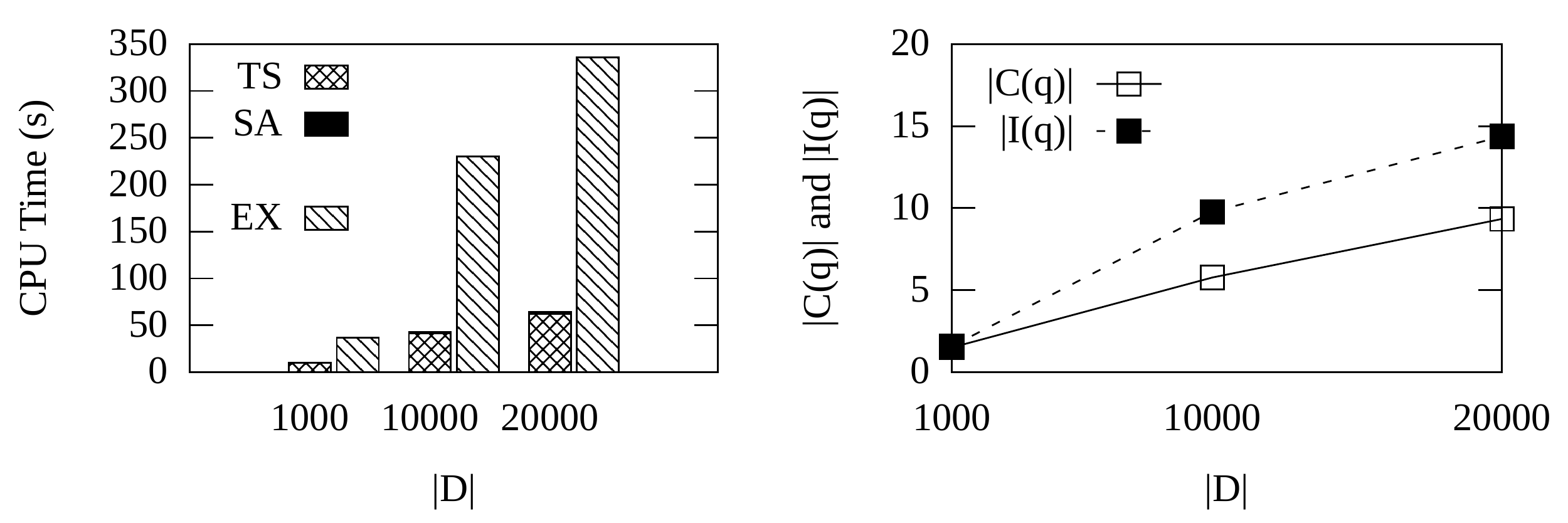}
\vspace{-2.4em}
          \caption{Varying the Number of Objects}
          \label{fig:num_objects}
            \vspace{-0.1cm}
        \end{figure}
The number of objects (Figure \ref{fig:num_objects}) leads to a decreasing
performance as well. The more objects stored in a database with the same
underlying motion model, the more candidates and influence objects are found
during the filter step. This leads to an increasing number of probability
calculations during refinement, and hence a higher query cost.

    \begin{figure}[t] \centering
        \subfigure{
            \includegraphics[width=0.45\columnwidth]{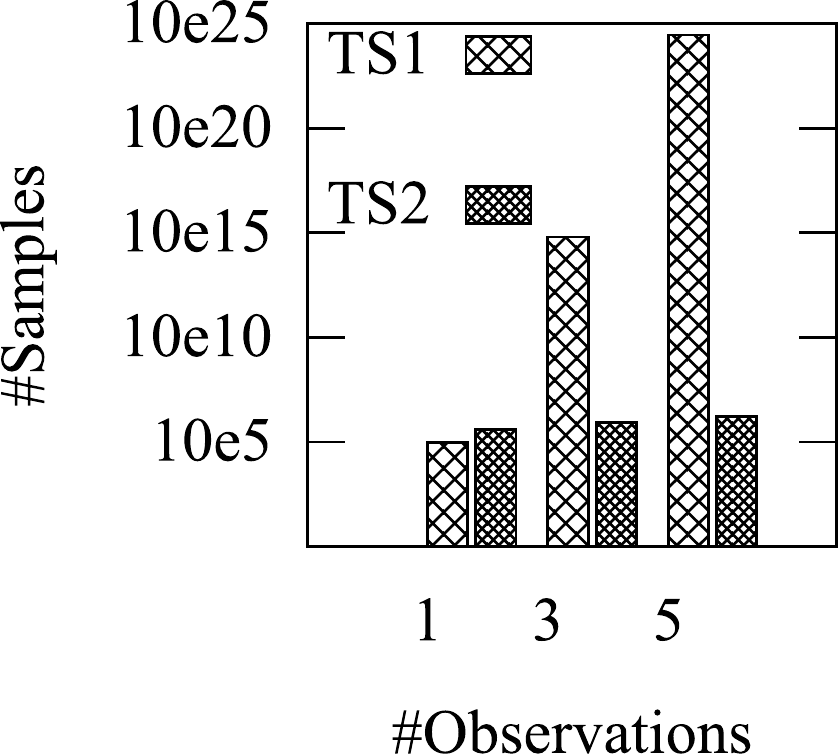}
            }
        \subfigure{
           \centering
           \includegraphics[width=0.45\columnwidth]{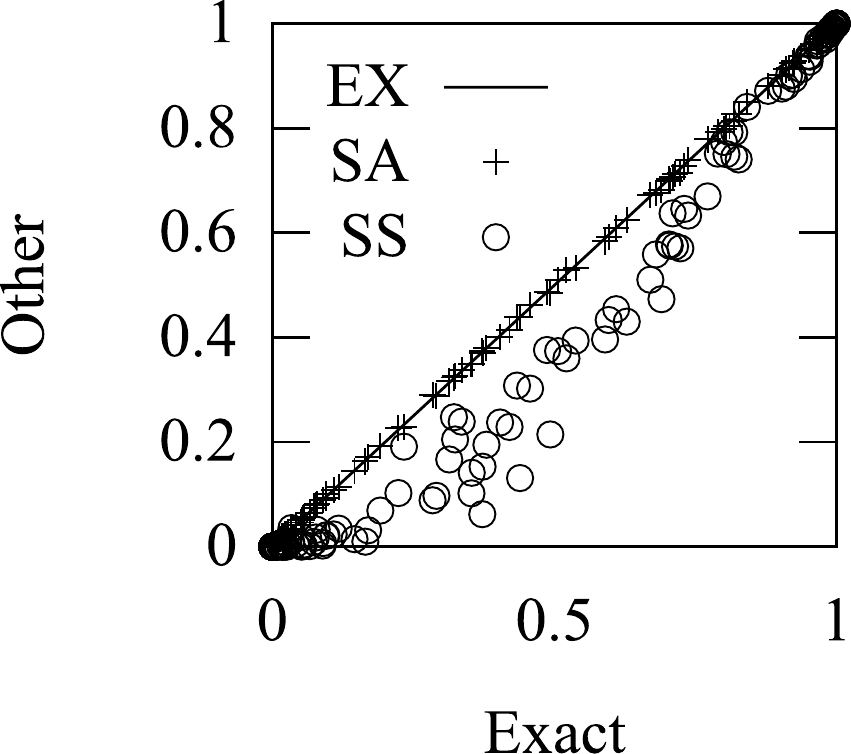}
           }\vspace{-0.2cm}
           \caption{Efficiency and Effectiveness of Sampling}
           \label{fig:effeff}
             \vspace{-0.3cm}
  \end{figure}

  \textbf{Sampling Efficiency.} In the next experiment we evaluate the overhead
 of the traditional sampling approach (using the a-priori Markov model only) compared to the approach
 presented in Section \ref{sec:sampling} which uses the a-posteriori model. The first,
 traditional approach (TS1) discards any trajectory not visiting all
 observations. As discussed in Section \ref{sec:sampling:naive}, the expected
 number of attempts required to draw one sample that hits all observations,
 increases exponentially in the number of observations. This increase is shown in
 Figure \ref{fig:effeff} (left), where the expected number of
 samples is depicted with respect to the number of observations. This approach
 can be improved, by segment-wise sampling between observations. Once the first
 observation is hit, the corresponding trajectory is memorized, and further
 samples from the current observation are drawn until the next observation is
 hit. The number of trajectories required to be drawn in order to obtain one
 possible trajectory is linear to the number of observations when using this
 approach (TS2). We note in Figure \ref{fig:effeff} (left), that in either approach,
 100k samples are required even in the case of having only two observations. The reason is
 that by generation, trajectories follow a near-shortest path, which is a highly
 unlikely scenario using the a-priori Markov model. Using the approach presented
 in Section \ref{sec:sampling}, the number of trajectories that need to
 be sampled, in order to obtain a trajectory that hits all observations, is always
 one.

 \textbf{Sampling Precision and Effectiveness.}
 Next, we evaluate the precision of our approximate P$\forall$NN query and an
 aspect of a competitor approach proposed in \cite{XuGuCheQiaoYu13}. The latter
 approach has been tailored for \emph{reverse} NN queries, but can easily be
 adapted to NN query processing. Essentially, this approach performs a snapshot
 query, i.e., $P\forall NN(o,q,\DB,t)$ for each $t \in T$. $P\forall
 NN(o,q,\DB,T)$ is then estimated by $\prod_{t\in T} P\forall NN(o,q,\DB,t)$.
 The scatterplot in Figure \ref{fig:effeff}(right) illustrates the results
of a series of $\forall NN$ queries ($v=0.2$, $|T|=5$).

At each experiment, we estimate probabilities by our sampling
approach (SA) (Section \ref{sec:sampling}) and by the adapted approach of
\cite{XuGuCheQiaoYu13} (SS). We model each case as a (x,y) point, where x models
the actual and y the estimated probability. For the exact method (EX) (Section
\ref{subsec:forallnnq}) the results always lie on the diagonal identity function
depicted by a straight line, showing that our sampling solution tightly approximates the results of the exact P$\forall$NN query. Concerning the
snapshot approach, a strong bias towards underestimating probabilities can be
observed. This bias is a result of treating points of time mutually independent.
In reality, the position at time $t$ must be in vicinity of the position at time
$t-1$, due to maximum speed constraints. This positive correlation in space
directly leads to a nearest neighbor correlation: If $o$ is close to $q$ at time
$t-1$, then $o$ is likely close to $q$ at time $t$. And clearly, if $o$ is more likely to be close to $q$ at time $t$, then $o$ is
more likely to be the NN of $q$ at time $t$. This correlation is ignored by
snapshot approaches.

The number of samples required to obtain an accurate approximation of the
probability of a binomial distributed random event such as the event that $o$ is
the NN of $q$ for each time $t \in T$ has been studied extensively in statistics
\cite{Hoe63}. Thus the required number of samples is not explicitly evaluated
here.

 \textbf{Real Dataset.}
 \begin{figure}[t] \centering
\includegraphics[width=\columnwidth]{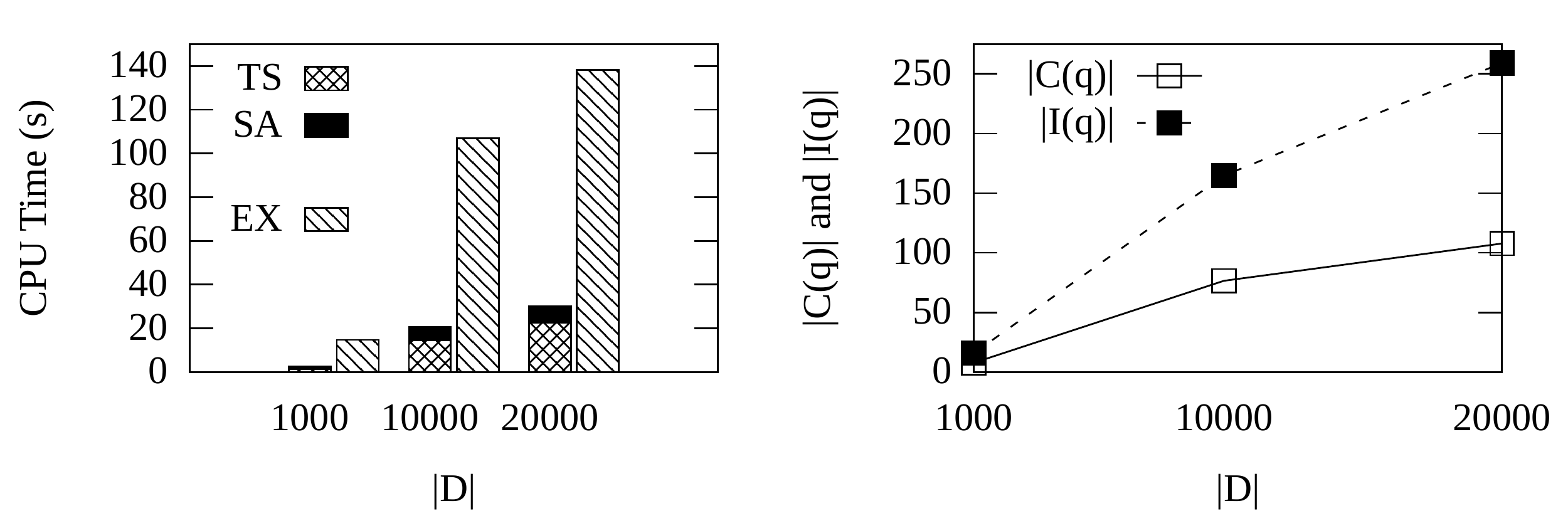}
  \vspace{-2em}
  \caption{Realdata: Varying the number of objects}
  \label{fig:num_objects_real}
\vspace{-0.1cm}
\end{figure}
We conducted additional experiments to evaluate P$\forall$NN queries on
the taxi dataset (Figure \ref{fig:num_objects_real}). Since the underlying
state space consisting of 3000 states is very small, we set $i=5$ and $v=0.6$ in order to
prevent uncertainty regions of objects to cover the whole network.
Based on this dataset, we ran an experiment varying the number of
objects between $1000$ and $20000$. The small size of the state
space leads to a higher objects density, leading to a larger
number of candidates and influence objects than the corresponding experiment on
the artificial dataset. On the other hand, a smaller state space and the lower level of
uncertainty decreases the complexity of matrix and vector operations.
Additionally, a higher number of candidates and influence objects
also decrease the probability that an object is a result of the
P$\forall$NN query, i.e., candidates are often pruned after
considering only a small number of influence objects. As a result,
the runtime cost on the real dataset is generally lower than on the
synthetic dataset.

\begin{figure}[t] \centering
\includegraphics[width=0.6\columnwidth]{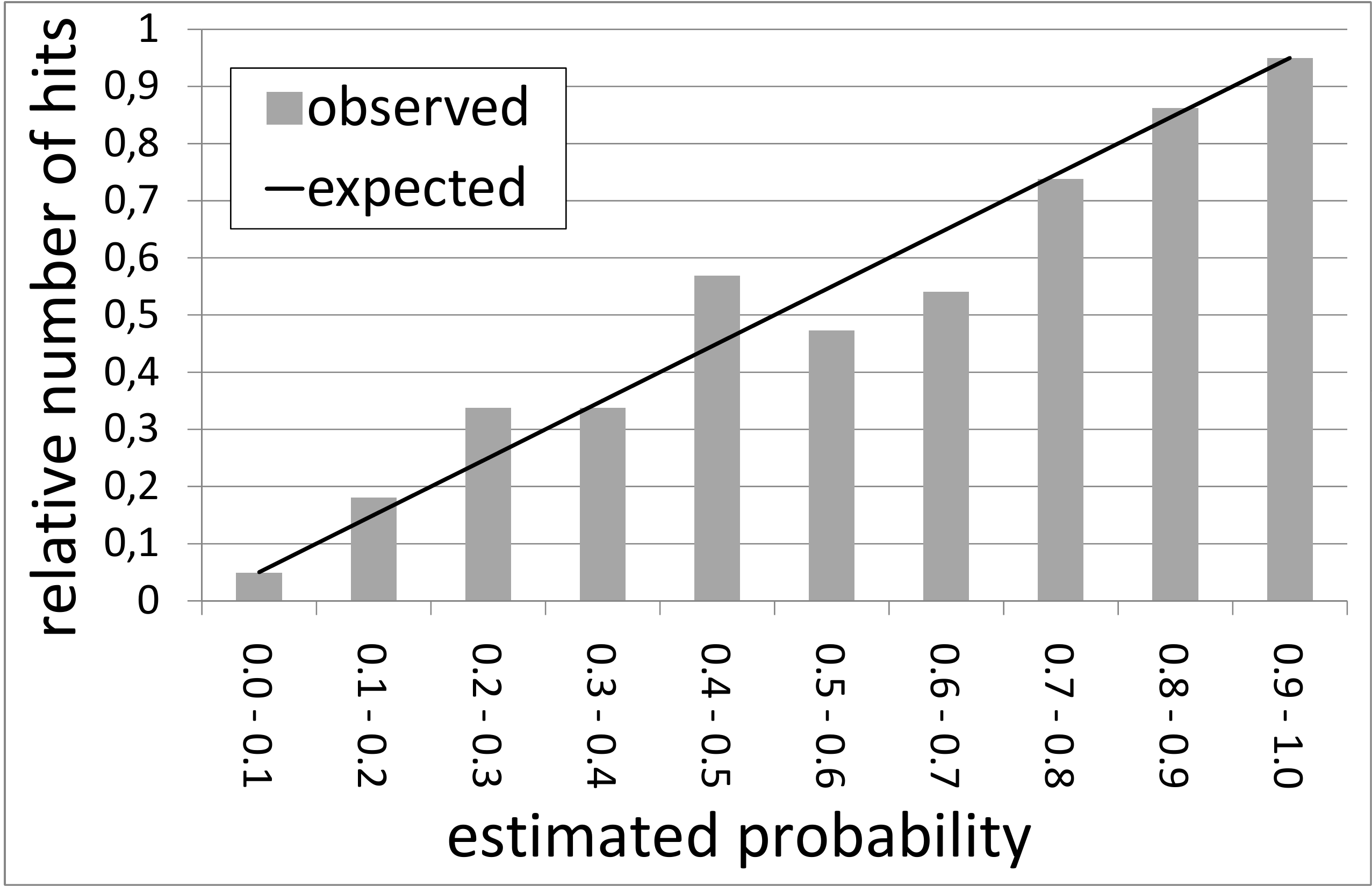}
  \caption{Quality}
  \label{fig:quality}
  \vspace{-0.1cm}
\end{figure}

 \textbf{Model Quality.} In the next experiment, we evaluate the quality of the
result probabilities of P$\forall$NN queries using the Markov model.
Therefore, our first aim is to generate interesting trajectories, as
we expect them to appear in real applications. The trajectories must
not follow an (unrealistic) Markov random walk (corresponding to our
a-priori model). At the same time, trajectories should not move on
the perfect shortest paths, as then only one trajectory may be
possible between two observations, leading to a perfect a-posteriori
model. Thus we generate ``\emph{near-shortest paths}'' from
shortest paths, by adding a wrong turn in every ten states. This
deviation simulates random errors of moving objects, e.g., due to
human error. Observations are taken from these trajectories at every
$i=10$ states. We perform 2500 P$\forall$NN queries using the
sampling approach. For each result object $o$ having a non-zero,
non-one result probability, a tuple is generated containing the
computed probability, as well as an indicator variable that is one
if $o$ is a \emph{true} $\forall$NN of $q$ and zero otherwise. This
ground truth is obtained by utilizing the full trajectory
information of all generated objects. The resulting tuples are
grouped by probability, with the fraction of $\forall$NNs being
aggregated. The result of this experiment is shown in Figure
\ref{fig:quality}. Here, result probabilities are grouped into ten
intervals. For each probability interval, the expected number of
hits, assuming that the estimated probability is correct is depicted
by a straight line. The observed number of hits, evaluated on the
ground truth is shown by bars. For each bucket, a small deviation
can be explained by random deviation, due to the fact, that fewer
tuples were derived in the probability
interval $[0.2,0.8]$. 

This result implies that our computed a-posteriori Markov model,
which is adapted to existing observations, is able to effectively
model the generated uncertain trajectories. This is notable, because
the generated trajectories do not, as the a-priori model assumes,
perform a weighted random walk. In fact, these trajectories are
generated such that they are close to the shortest path.

 \subsection{Continuous Queries}

In our experimental evaluation on continuous queries we compare the
runtime cost and the size of the (unprocessed) result set for various sizes of
the database and values of the threshold $\tau$.
\begin{figure}[t] \centering
\includegraphics[width=\columnwidth]{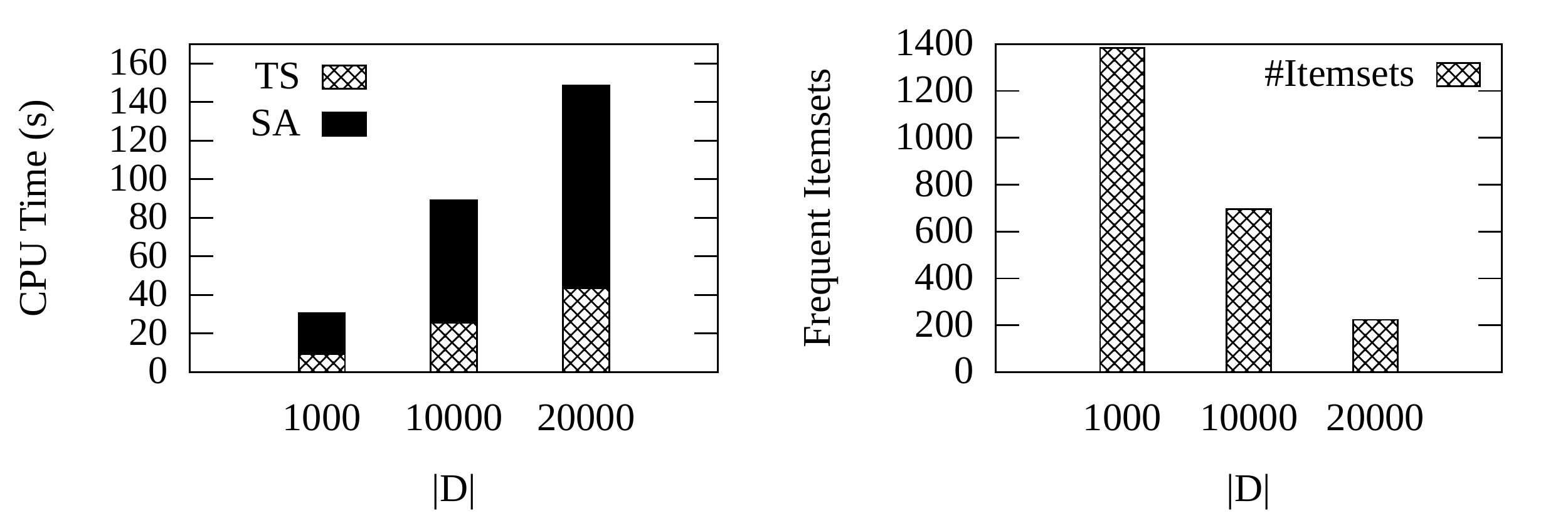}
  \vspace{-2em}
  \caption{Continuous Queries: Varying the number of objects}
  \label{fig:continuous_num_objects}
\vspace{-0.1cm}
\end{figure}

Increasing the number of objects stored in the database leads to an
increase in the time $TS$ to compute the a-posteriori Markov model
for each object (cf. Figure \ref{fig:continuous_num_objects} (left)).
This result is equivalent to the result for P$\forall$NN queries,
since a-posteriori models have to be computed for either query
semantics. However, the time required to obtain a sufficient number
of samples (\emph{SA}) is much higher, since probabilities have to
be estimated for a number of sets of time intervals, rather than for
the single interval $T$. This increase in run-time is alleviated by
the effect that the number of candidates obtained in the candidate
generation step of our Apriori-like algorithm decreases (Figure
\ref{fig:continuous_num_objects} (right)). This effect follows from
the fact that more objects lead to more pruners, leading to smaller
probabilities of intervals, leading to fewer candidates.
\begin{figure}[t] \centering
\includegraphics[width=\columnwidth]{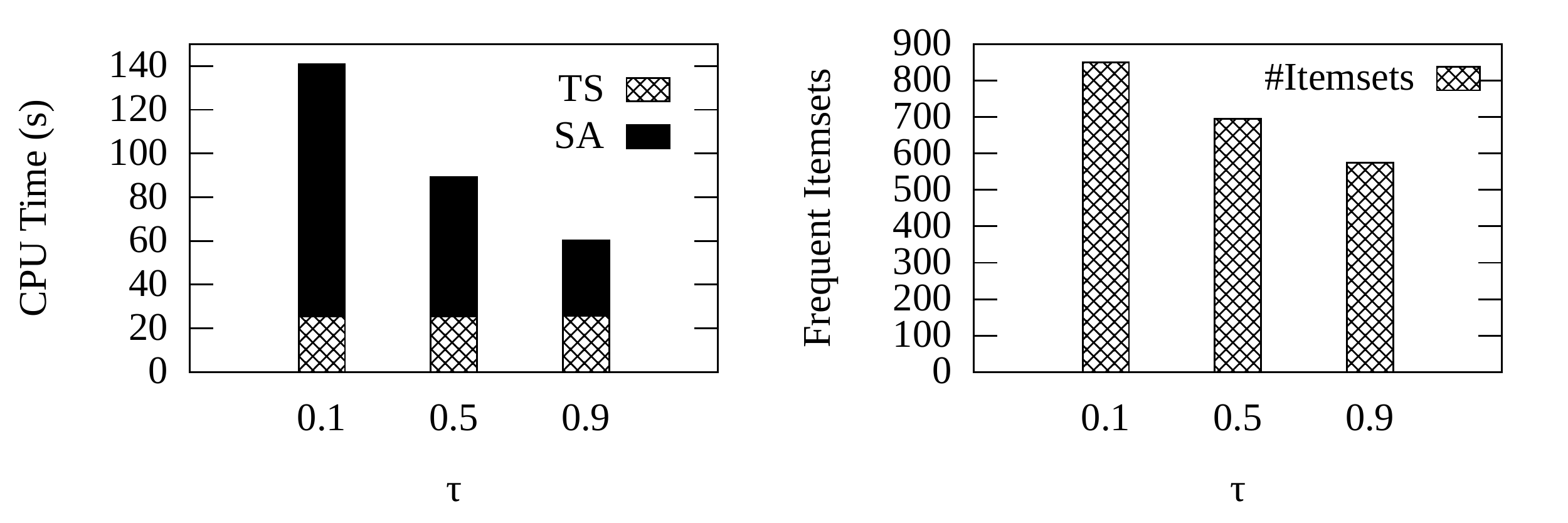}
  \vspace{-2em}
  \caption{Continuous Queries: Varying $\tau$}
  \label{fig:continuous_tau}
    \vspace{-0.1cm}
\end{figure}

The results of varying $\tau$ can be found in Figure
\ref{fig:continuous_tau}. Clearly an increasing probability threshold decreases
the average size of the result (Figure
\ref{fig:continuous_tau} (right)). Consequently, at the same time, the
computational complexity of the query decreases as fewer candidates are generated. Figure
\ref{fig:continuous_tau}(left) shows that the run-time of the sampling approach
becomes very large for low values of $\tau$, since samples have to be evaluated
for each generated candidate set. Similar to the Apriori-algorithm, the number
of such candidates grows exponentially with $T$, if $\tau$ is small.

\section{$k$-Nearest-Neighbor Queries}\label{sec:knn}
To answer P$\exists k$NN queries, P$\forall k$NN queries and PC$k$NN
queries approximately in the case of $k>1$, we can again utilize
the model adaptation and sampling technique presented in Section
\ref{sec:sampling}. Therefore, possible worlds are sampled using the
a-posteriori models of all objects, given their observations. On
each such (certain) world an existing solution for $k$NN search on
certain trajectories
(e.g. \cite{FreGraPelThe07,GueBehXu10,IweSamSmi03,TaoPapShe02}) is applied. The
results of these deterministic queries can again be used to estimate the distribution of the probabilistic result.

Here, we briefly discuss the complexity of computing exact
results of P$\exists k$NN queries, P$\forall k$NN queries and PC$k$NN
queries. A formal definition of these queries, as well as a more
detailed discussion can be found in our technical report
(\cite{NieZueEmrRenetal13}).

The P$\forall k$NN query is NP-hard in $k$. The proof of this
statement can be found in our technical report \cite{NieZueEmrRenetal13}. To
summarize, the proof shows that for the special case where
$|\DB|=k+1$ the problem can be reduced to an P$\exists$NN query
which has been shown to be
  NP-hard in $|\DB|$ in Section \ref{sec:existsnn}. We then extend this proof for arbitrary
  sizes of the database $\DB$, and show that, as long as $|\DB| > k$,
  the run-time of a
  P$\forall k$NN query is at least exponential in $k$. The NP-hardness of the
  P$\exists k$NN query is shown in a similar fashion in
  \cite{NieZueEmrRenetal13}. Finally, the continuous $\forall k$NN which is
  based on the $\forall k$NN query, is also shown to be hard in $k$.

\section{Conclusions}\label{sec:conclusions}
  In this paper, we addressed the problem of answering NN queries in uncertain
  spatio-temporal databases under temporal dependencies. We proposed three
  different semantics of NN queries: P$\forall$NN queries, P$\exists$NN queries
  and PCNN queries.
  We have shown that the P$\forall$NN query can be solved in polynomial time,
  while the P$\exists$NN query is NP-hard.
  These results provide insights about the complexity of NN search over
  uncertain data in general since the Markov chain model is one of the simplest
  models that consider temporal dependencies. More complex models are expected
  to be at least as hard. To mitigate the problems of computational complexity,
  we propose a sampling-based approach based on Bayesian inference. For 
  the query problems that can be solved in PTIME, we presented exact
  query evaluation algorithms. Specifically, for the PCNN query we proposed to
  reduce the cardinality of the result set by means of an Apriori pattern mining
  approach. To cope with large trajectory databases, we introduced a pruning
  strategy to speed-up PNN queries exploiting the UST tree, an index for
  uncertain trajectory data. The experimental evaluation shows that our adapted
  a-posteriori model allows to effectively and efficiently answer probabilistic
  NN queries despite the strong a-priori Markov assumption.

\vspace{1cm}

\begin{scriptsize}
\bibliographystyle{IEEEtran}
\bibliography{abbrev,literature}
\end{scriptsize}   

\end{document}